\numberwithin{equation}{section}
\theoremstyle{plain}
\newtheorem{theorem}{Theorem}[section]
\newtheorem{proposition}[theorem]{Proposition}
\newtheorem{corollary}[theorem]{Corollary}
\newtheorem{lemma}{Lemma}[section]
\theoremstyle{definition}
\newtheorem{remark}{Remark}[section]
\def\R{\mathbb{R}}
\def\cM{\mathcal{M}}
\def\M{\mathcal{M}}
\DeclareMathOperator*{\argmin}{argmin}
\title {A projection approach for multiple monotone regression}
\author{Lizhen Lin, Brian St. Thomas, Walter W. Piegorsch, James Scott and Carlos Carvalho}
\email{lizhen.lin@nd.edu}
\email{brian.st.thomas@duke.edu}
\email{piegorsch@math.arizona.edu}
\email{james.scott@mccombs.utexas.edu}
\email{carlos.carvalho@mccombs.utexas.edu}
\address{Department of Applied and Computational Mathematics and Statistics,
The University of Notre Dame, Notre Dame, IN}
\address{Department of Statistical Science, Duke University, Durham, NC.}
\address{BIO5 Institute and Department of Mathematics, The University of Arizona, Tucson, AZ}
\address{The University of Texas McCombs School of Business, Austin, TX}
\address{The University of Texas McCombs School of Business, Austin, TX}
\begin{document}
\maketitle

\begin{abstract}

Shape-constrained inference has wide applicability in bioassay, medicine, economics, risk assessment, and many other fields. Although there has been a large amount of work on  monotone-constrained univariate curve estimation,  multivariate shape-constrained problems are much more challenging, and fewer advances have been made in this direction. With a focus on monotone regression with multiple predictors, this current work  proposes a projection approach to estimate a multiple monotone regression function.  An initial unconstrained estimator -- such as a local polynomial estimator or spline estimator -- is  first obtained,  which is then projected onto the shape-constrained space. A shape-constrained estimate
is obtained by sequentially projecting an  (adjusted) initial estimator along each univariate direction.  Compared to the initial unconstrained estimator, the projection estimate results in a reduction of estimation error in terms of both $L^p$ ($p\geq 1$) distance and supremum distance.  We also derive the asymptotic distribution of the projection estimate. Simple computational algorithms are available for implementing the projection in both the unidimensional and higher dimensional cases. Our work  provides a simple recipe for practitioners to use in real applications, and is illustrated with a joint-action example from environmental toxicology.

\textbf{Keywords}:  Monotone regression with multiple predictors; Drug interaction; Environmental risk assessment; Projection

\end{abstract}
\maketitle

\section{Introduction} \label{sec:intro}

Shape-constrained (e.g. monotone constrained) statistical inference is applied to a variety of data-analytic problems.  In environmental toxicology, for instance, monotone constraints are imposed based on natural assumptions that the response of subjects exposed to certain chemical pollutants will not in general decrease with the increasing pollutant dose \cite{PiXi14}. Another common application can be found in disease screening,  where the probability of disease is assumed non-decreasing with increasing measurements of a pertinent biomarker \cite{biomaker, baker2000}.  Or in economics, the demand and supply curve is in general assumed to be monotone \cite{demand-supply}.   Motivated by this large variety of applications, a panoply of statistical approaches has been developed for estimating \emph{monotone curves}, i.e., one-dimensional monotone functions. Frequentist methods in general fall into three categories; the first involves kernel based approaches such as described by \cite{Muller2}, \cite{mammen} and \cite{dette2}. The second class of methods
models
the regression function as a linear span of a spline basis such as in \cite{ramsay88} and \cite{kong2}.  The third class of methods
is
based on \emph{isotonic regression} \cite{roberston88,barlow72}, recent developments of which can be found in \cite{Bhattacharya1, bhli10, bhli11, bhli13} and \cite{SJOSlin}.  In addition, a few Bayesian approaches have been proposed in, for example, \cite{Bornkamp09}, \cite{Lin23022014}, \cite{shiv09} and \cite{shiv11}. Although there is large body of work on estimating monotone curves,  shape-constrained problems with respect to multiple predictors, which are,
e.g.,
important in drug interaction studies or in risk analyses involving the joint action of multiple pollutants,  are more challenging.  This is due to the difficulty in incorporating the multivariate shape constraints.  Along these lines, \cite{Saarela} proposes a Bayesian approach for multiple regression  using marked point processes, while \cite{Lin23022014} combines Gaussian processes with projections for estimating a multivariate monotone function. In \cite{chetno09}, a monotone arrangement procedure \cite{inequality} is applied to an initial unconstrained estimator.


Our motivation here is to develop a theoretically appealing, computationally feasible, and convenient-to-implement approach for  estimating monotone constrained functions with multiple predictors.
We propose use of a \emph{projection} of some initial, unconstrained estimator of the regression function, i.e., finding the monotone function closest to this initial estimator in some distance norm. Such estimates are intuitive, and can result in  reduction of the estimation error compared to that of a na\"{i}ve initial estimator.  Note that a general  projection framework for constrained functional parameters, in particular for constrained functions forming a closed convex cone  in a Hilbert space, is proposed in \cite{frechref}. Our approach falls into this general framework; however, our projection algorithm makes use of the fact that the convex  cone of a multivariate monotone function is an intersection of a collection of convex cones of univariate monotone functions.  We then derive an expression  for the projection estimate, making use of unidimensional projections.  It can be  shown that the projection  of an initial estimator onto the space of monotone functions with multiple predictors can be decomposed into sequential projections of an adjusted initial estimator along each univariate direction.  This simplifies the operation substantially, and allows us to suggest
computational algorithms for approximating such functions.

In the next section, we study in detail this monotone projection framework and the properties of our projection estimates.
In section \ref{sec:newSec3}, we describe a bootstrap methodology for constructing confidence intervals. In section \ref{sec:newSec4},
we carry out a simulation study to explore the methods' operating characteristics, and
we apply our methods to a two-dose, joint-action data set from environmental toxicology.
Section \ref{sec:newSec5} ends with a short discussion.

\section{A projection framework for monotone regression with multiple predictors} \label{sec:Sec2}

\subsection{Preliminary estimator for the proposed approach}

Let $x\in \R^p$ be a $p$-dimensional predictor and $y$ be a response variable.
The response variable can be
discrete or continuous, depending on the application of interest.
We define the regression function $F(x)$ in a general framework as
\begin{align}
\label{eq-model1}
F(x)=E(y \mid x).
\end{align}
For instance, if $y$ is binary, taking values 0 or 1, then take $F(x)$ as the response probability $F(x)=P(y=1\mid x)$.

Denote the data as $(x_i,y_i)$, $i=1,\ldots, n$. Without loss of generality,  we assume the predictors or covariates
satisfy
$x_i=(x_{i1},\ldots, x_{ip})\in [0,1]^p\subset\R^p$.   The regression function $F(x)$ is assumed to be \emph{monotone} with respect to a natural partial ordering on $\R^p$, that is,
for $x_1=(x_{11},\ldots, x_{1p})$,
$x_2=(  x_{21},\ldots, x_{2p})$ and $x_{1j}\leq x_{2j}$ (for $j=1,\ldots, p$), one has $F(x_1)\leq F(x_2)$. We are interested in conducting inference on $F(x)$ under the monotonicity constraint.

Denote $\M$ as the space of monotone functions on $\mathcal{X}=[0,1]^p$. It can be seen that $\M$ is a closed convex cone. Our approach relies on \emph{projecting} an initial estimator of $F(x)$  on to $\M$. For instance, the initial estimator could be a local polynomial estimator such as the popular kernel estimator or a local linear estimator. Alternatively,  one could employ expansions of spline bases such as a $B$-spline basis.  In any case, we will show the resulting projection estimates exhibit desirable theoretical properties as well as good finite-sample performance. Efficient computational algorithms  are also straightforward to develop for  implementing our approach.

To illustrate, we first consider a local polynomial regression estimator for a one-dimensional curve. Let $K(x)$ be a kernel function with $\int K(x)dx=1$, $\int xK(x)dx=0$  and $\int x^2K(x)dx < \infty$.  Denote $K_h(x)=h^{-1}K(x/h)$.  Let $\boldsymbol{\gamma}=(\gamma_0,\gamma_1,\ldots, \gamma_p)$ be a $(p+1)$-dimensional coefficient vector, and take
\begin{align*}
\widehat{\boldsymbol{\gamma}}=\arg \min_{\boldsymbol{\gamma}\in \R^{p+1}} \sum_{i=1}^n K_h(x-x_i)\left\{y_i - \sum_{j=0}^p\gamma_j(x-x_i)^j\right\}^2.
\end{align*}
An initial local polynomial estimator of $F(x)$ can be simply
\begin{align*}
\widehat{F}(x)=\widehat{\gamma}_0,
\end{align*}
which can be fitted  easily  via weighted least squares.

Another popular class of methods for nonparametric regression involves spline models \cite{Zhang2012}. More precisely,  one can construct a class of initial estimators by modeling the regression function as a linear span of  a spline basis, the most of popular of which is the $B$-spline basis \cite{EilerMar10}.
Given the knot sequence $\tau_1,\tau_2,\cdots,\tau_N$, the cubic B-spline basis $\{B_{j,4}\}_{j=1}^4$  is defined recursively as follows:
\begin{align*}
&B_{j,1}(x)=
\begin{cases}
1 & \tau_{j}\leq x\leq\tau_{j+1}\\
0 & \text{otherwise,}
\end{cases}\\
&B_{j,l}(x)=\frac{x-\tau_j}{\tau_{j+l-1}-\tau_j}B_{j,l-1}(x)+\frac{\tau_{j+l}-x}{\tau_{j+l}-\tau_{j+1}}B_{j+1,l-1}(x)\;\;(l=2,3,4).
\end{align*}
With this, one 
obtains an initial estimator of  $F(x)$ as a linear combination of the spline basis functions.  Using the $B$-spline basis above, this sets
\begin{equation}
\label{spline F}
F(x)=\sum_{j=1}^{N}\beta_j B_{j,4}(x),
\end{equation}
where $(\beta_1,\ldots, \beta_N)$ is the vector of coefficients.
The estimate is $\widehat F(x)=\sum_{i=1}^N\widehat{\beta}_iB_i(x)$ with
 \begin{align*}
(\widehat\beta_1,\ldots, \widehat\beta_N)=\arg\min_{\boldsymbol{\beta}\in \R^{m}} \left[\lambda \sum_{i=1}^n\{y_i-F(x_i)\}^2+(1-\lambda)\int \{F''(x)\}^2dx\right].
 \end{align*}
Here $\lambda$ is a smoothing parameter that controls the tradeoff between tighter fit to the data and smoothness of the estimates.
We refer to \cite{gracebook} for a general reference on smoothing splines.

For the multivariate case, one can easily obtain a kernel based initial estimator by employing a multivariate kernel $K$.  Or one can obtain an initial estimator using tensor product B-splines \cite{EilerMar10}. For example, take $p=2$ with the two-dimensional function $F(x_1, x_2)$. Let $\{B_{i1}(x_1)\}$, $i=1,\ldots, N_1$, be a B-spline basis along the $x_1$ direction, and $\{B_{j2}(x_2)\}$, $j=1,\ldots, N_2$, be a spline basis along the $x_2$ direction. A tensor product spline basis is given by $\{ B_{i1}B_{j2} \}$, $i=1,\ldots, N_1$, $j=1,\ldots, N_2$.  The multivariate function is modeled as a linear span of the tensor product  B-spline basis.   An initial estimator $\widehat{F}(x_1, x_2)$  can be obtained by minimizing the objective function
\begin{align*}
\lambda \sum_{i=1}^n\{y_i-F(x_{i1}, x_{i2})\}^2+(1-\lambda)\int\int \left\{    \left( \dfrac{\partial^2 F}{\partial x_1^2}\right)^2+2\left( \dfrac{\partial^2 F}{\partial x_1x_2}\right )^2+\left(\dfrac{\partial^2 F}{\partial x_2^2}\right)^2\right\}dx_1dx_2.
\end{align*}

With any initial estimator $\widehat{F}(x_1,x_2)$, one  then \emph{projects} $\widehat{F}(x_1,x_2)$ onto  the monotone space $\M$, which produces our ultimate estimator  of $ F(x_1,x_2)$.  We now proceed to give a rigorous definition of this projection and characterize its properties in the next subsection.

\subsection{A projection framework for shape constrained estimators}

Let $w$ be a function on $\mathcal{X}=[0,1]^p$.  We define the projection of $w$ onto the constrained space $\mathcal{M}$ as
\begin{equation}
\label{eq-mother}
 P_w =\argmin_{G \in \mathcal{M}} \int_{\mathcal{X}} \{ w(t) - G(t) \}^2 dt.
\end{equation}
That is, the projection estimate is defined to be the element in $\mathcal{M}$ that is closest to the initial (pre-projected function) estimator in $L^2$ distance.

Recall $\mathcal{M}=\mathcal{M}[0,1]^p$, which is  the space of monotone functions on $[0,1]^p$. Focusing initially on the $p=1$ case, \eqref{eq-mother} has the following closed form solution (see \cite{Lin23022014}  and \cite{Anevksi:2011})
\begin{equation}
\label{projection}
P_w(x)=\inf_{v\geq x} \sup_{u\leq x} \dfrac{1}{v-u}\int_u^vw(t)dt,\; \;\; x\in[0,1].
\end{equation}
The existence and uniqueness of the projection follow from Theorem 1 in \cite{rychlik}.

Letting $h$ be any function on [0,1], the \emph{greatest convex minorant}  of  $h$ is defined by
\begin{equation}
\label{eq-gcv}
T(h)=\arg\max\{z: z\leq h, \;z \;\text{convex} \}
\end{equation}
The solution \eqref{projection} is the slope of the greatest convex minorant of $\overline{w}(t)=\int_0^t w(s)ds$.

\begin{remark}
The projection in \eqref{projection} can be well approximated using the  pooled adjacent violators algorithm \cite{barlow72}. 
\end{remark}

One can easily generalize the above one-dimensional projection  algorithm to multiple dimensions ($p>1$). Take $p=2$; for which the following algorithm \cite{Lin23022014} converges to the two-dimensional projection
$$P_w=\argmin_{G\in\M} \int_0^1\int_0^1\{w(s,t)-G(s,t)\}^2ds dt.$$

\textbf{Algorithm 1}. 
For any fixed $t$, $w(s,t)$ is a function of $s$ and we use the projection \eqref{projection} to obtain a monotone function in $s$. We perform this projection for all values of $t$, and denote the resulting surface by $\widehat{w}^{(1)}(s,t)$.
Letting $S^{(1)} = \widehat{w}^{(1)} - w$, for any fixed $s$, project $w + S^{(1)}$ as a function of $t$ onto $\mathcal{M}[0,1]$ using \eqref{projection}. Perform this projection for all values of $s$ and denote the surface by $\widetilde{w}^{(1)}(s,t)$.
Set $T^{(1)} = \widetilde{w}^{(1)} - (w+S^{(1)})$.
Letting $i=2,\ldots,k$, in the $i$th step  we obtain $\widehat{w}^{(i)}$ by projecting $w + T^{(i-1)}$ along the $s$ direction for every fixed $t$ value in [0,1] and $\widetilde{w}^{(i)}$ as the projection of $w+S^{(i)}$ along the $t$ direction for every fixed $s$ value in [0,1].
 The algorithm terminates when $\widehat{w}^{(i)}$ or $\widetilde{w}^{(i)}$ is monotone with respect to both $s$ and $t$ for some step $i$.

Via an induction argument, one can show that projecting a $p$-dimensional function onto $\mathcal M[0,1]^p$ with $p>2$ can be characterized similarly to Algorithm 1, above, by introducing $p$ residual sequences.

Given an initial estimate $\widehat{F}(x)$, we denote the \emph{projection estimate} of $F(x)$ under the shape constraints  as $\widetilde{F}(x)$ with
\begin{equation}
\widetilde{F}(x)=P_{\widehat{F}(x)}(x).
\end{equation}
$\widetilde{F}(x)$ is the ultimate estimate used for inference.

Now, let \begin{align*}
\|\widetilde{F}(x)-F(x)\|_2&=\left[\int_{[0,1]^p} \{\widetilde{F}(x)-F(x)\}^2dx\right]^{1/2}\\
&=\left[\int_0^1\cdots\int_0^1\{\widetilde{F}(x)-F(x)\}^2dx_1\cdots dx_p\right]^{1/2},
\end{align*}
which is the $L^2$ norm for a $p$-dimensional function $F(x)$.

The following propositions show that $\widetilde{F}(x)$ is `closer' to the true regression monotone function $F(x)$, compared to the initial estimator $\widehat{F}(x)$. As a consequence, $\widehat{F}(x)$ produces smaller error in the $L^2$ norm compared to that of the initial estimator $\widehat{F}(x)$.
\begin{proposition}
\label{prop-2.1}
Let $x\in\R^p$ ($p\geq 1$). Let $\widehat{F}(x)$ be an initial estimator of $F(x)$ and $\widetilde{F}(x)=P_{\widehat{F}(x)}.$ Then the following holds:
\begin{align}
\|\widetilde{F}(x)-F(x)\|_2\leq \|\widehat{F}(x)-F(x)\|_2
\end{align}
\end{proposition}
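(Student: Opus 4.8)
The plan is to recognize the minimization in \eqref{eq-mother} as the metric projection onto a closed convex subset of the Hilbert space $H=L^2([0,1]^p)$, equipped with $\langle f,g\rangle=\int_{[0,1]^p}f(t)g(t)\,dt$, and then invoke the classical non-expansiveness of such a projection relative to any point that already lies in the set. Since the true regression function $F$ is assumed monotone we have $F\in\mathcal M$, so $F$ can serve as that reference point. I would first record the structural facts already available in the text: $\mathcal M$ is a closed convex cone in $H$; by the cited consequence of Theorem~1 in \cite{rychlik} the minimizer $\widetilde F=P_{\widehat F}$ in \eqref{eq-mother} exists, is unique, and belongs to $\mathcal M$ (for $p=1$ it is even given explicitly by \eqref{projection}); and the initial estimators under consideration (local polynomial or spline) lie in $L^2([0,1]^p)$, so all inner products below are finite.

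Next I would establish the variational (obtuse-angle) characterization of the projection: for every $G\in\mathcal M$,
\[
\langle\,\widehat F-\widetilde F,\;G-\widetilde F\,\rangle\le 0 .
\]
This is the standard first-order argument: for $G\in\mathcal M$ and $\theta\in(0,1]$, convexity of $\mathcal M$ gives $\widetilde F+\theta(G-\widetilde F)\in\mathcal M$, so the scalar function $\theta\mapsto\|\widehat F-\widetilde F-\theta(G-\widetilde F)\|_2^2$ attains its minimum over $[0,1]$ at $\theta=0$; differentiating at $0^{+}$ yields the displayed inequality.

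Finally I would decompose $\widehat F-F=(\widehat F-\widetilde F)+(\widetilde F-F)$ and expand:
\[
\|\widehat F-F\|_2^2=\|\widehat F-\widetilde F\|_2^2+2\langle\,\widehat F-\widetilde F,\;\widetilde F-F\,\rangle+\|\widetilde F-F\|_2^2 .
\]
Applying the variational inequality with $G=F\in\mathcal M$ makes the cross term non-negative, hence
\[
\|\widehat F-F\|_2^2\ \ge\ \|\widehat F-\widetilde F\|_2^2+\|\widetilde F-F\|_2^2\ \ge\ \|\widetilde F-F\|_2^2 ,
\]
and taking square roots gives $\|\widetilde F(x)-F(x)\|_2\le\|\widehat F(x)-F(x)\|_2$.

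I do not expect a genuine obstacle: this is exactly the fact that projection onto a closed convex set in a Hilbert space is firmly non-expansive, specialized to the reference point $F$. The only steps that need a word of care are confirming that the $L^2$-minimization in \eqref{eq-mother} is the metric projection (so that existence, uniqueness, and the first-order condition are legitimate), which is covered by the cited results, and checking $\widehat F\in L^2([0,1]^p)$ so the expansion is valid. It is worth noting that the argument is dimension-free, covering $p\ge 1$ uniformly, and that the same chain actually yields the sharper bound $\|\widehat F-F\|_2^2-\|\widetilde F-F\|_2^2\ge\|\widehat F-\widetilde F\|_2^2$.
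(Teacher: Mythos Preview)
Your argument is correct and is essentially the classical non-expansiveness proof for metric projection onto a closed convex set; the paper itself acknowledges this route by remarking that the proposition ``falls as a special case of Lemma~2.3 of \cite{frechref}.'' However, the proof the paper actually writes out is genuinely different from yours. Rather than appealing directly to the obtuse-angle inequality and decomposing $\widehat F-F$, the paper works through the iterative Algorithm~1: it introduces the cones $C_s,C_t$ and their duals, shows via the dual-cone decomposition that the sequence of norms $\|\widehat w^{(k)}\|,\|\widetilde w^{(k)}\|$ produced by the alternating one-dimensional projections is non-increasing, and thereby obtains $\|P_w\|\le\|w\|$. It then combines this norm inequality with the variational inequality $\langle w-P_w,F\rangle\le 0$ to conclude $\|P_w-F\|^2\le\|w-F\|^2$.

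What each approach buys: your proof is shorter, dimension-free, and yields the sharper Pythagorean-type inequality $\|\widehat F-F\|_2^2\ge\|\widehat F-\widetilde F\|_2^2+\|\widetilde F-F\|_2^2$. The paper's proof, by contrast, is tailored to illuminate why Algorithm~1 behaves well---specifically, that the intermediate iterates have monotone decreasing norm---which is of independent interest for the computational side of the paper even if it is a more circuitous route to the proposition itself.
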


For a proof, see the Appendix.

Corollary \ref{coro-1} follows immediately from  Proposition  \ref{prop-2.1}.

\begin{corollary}
\label{coro-1}
\begin{align*}
\|\widehat{F}(x)-F(x)\|_2=O(\lambda_n).
\end{align*}
where $\lambda_n\rightarrow 0$ as $n\rightarrow \infty$. Then
\begin{align*}
\|\widetilde{F}(x)-F(x)\|_2=O(\lambda_n).
\end{align*}
\end{corollary}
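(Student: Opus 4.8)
The statement to prove is Corollary~\ref{coro-1}: if the initial estimator satisfies $\|\widehat F(x)-F(x)\|_2 = O(\lambda_n)$ with $\lambda_n\to 0$, then the projection estimate inherits the same rate, $\|\widetilde F(x)-F(x)\|_2 = O(\lambda_n)$.

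The plan is to derive this as an immediate consequence of Proposition~\ref{prop-2.1}. That proposition asserts the one-sided inequality $\|\widetilde F(x)-F(x)\|_2 \le \|\widehat F(x)-F(x)\|_2$, valid because the true regression function $F$ lies in the closed convex cone $\mathcal{M}$ and $\widetilde F = P_{\widehat F}$ is the $L^2$-projection of $\widehat F$ onto $\mathcal{M}$ (projections onto closed convex sets in a Hilbert space are non-expansive toward any point of the set). First I would invoke Proposition~\ref{prop-2.1} to write
\begin{align*}
\|\widetilde F(x)-F(x)\|_2 \le \|\widehat F(x)-F(x)\|_2.
\end{align*}
Then I would substitute the hypothesized rate $\|\widehat F(x)-F(x)\|_2 = O(\lambda_n)$, obtaining $\|\widetilde F(x)-F(x)\|_2 = O(\lambda_n)$ directly, and note $\lambda_n\to 0$ so that $\widetilde F$ is $L^2$-consistent. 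That is the entire argument; there is essentially no computation.

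The only point requiring a word of care is the well-definedness of $\widetilde F = P_{\widehat F}$ and the applicability of Proposition~\ref{prop-2.1}: one needs $F \in \mathcal{M}$, which holds by the standing monotonicity assumption on the regression function, and one needs the projection to exist and be unique, which was already recorded in the excerpt (existence and uniqueness follow from Theorem~1 of \cite{rychlik} for $p=1$, and from the convergence of Algorithm~1 together with the closed-convex-cone structure of $\mathcal{M}$ for general $p$). Since Proposition~\ref{prop-2.1} is stated for all $p\ge 1$, the corollary holds in the multivariate setting as well with no extra work.

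I do not anticipate any genuine obstacle here; the corollary is a formal consequence of the contraction property and is labelled as following ``immediately.'' If one wanted to be slightly more explicit, one could unwind the big-$O$ notation: by hypothesis there exist $C>0$ and $N$ such that $\|\widehat F-F\|_2 \le C\lambda_n$ for all $n\ge N$, and then Proposition~\ref{prop-2.1} gives $\|\widetilde F-F\|_2 \le C\lambda_n$ for the same $C$ and $N$, which is precisely the claim. The constant is in fact preserved, not merely the order.
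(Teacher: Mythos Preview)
Your proposal is correct and follows exactly the paper's approach: the corollary is stated to follow ``immediately from Proposition~\ref{prop-2.1},'' and your argument is precisely that immediate deduction via the contraction inequality $\|\widetilde F-F\|_2\le\|\widehat F-F\|_2$.
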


In the one-dimensional case, we can derive more general results on the reduction in estimation error for the projection estimator.
\begin{theorem}
\label{th-1d}
Assume $x\in \R$. Let $\Phi$ be any convex function and $\widetilde{F}(x)$ be an initial estimator of $F(x)$, such as a kernel estimator or a spline estimator. Let  $\widetilde{F}(x)=P_{\widehat{F}(x)}(x).$ One has
\begin{equation}
\label{eq-phip}
\int_0^1\Phi\left\{\widetilde{F}(x)-F(x)\right\}dx\leq \int_0^1\Phi\left\{\widehat{F}(x)-F(x)\right\}dx.
\end{equation}
\end{theorem}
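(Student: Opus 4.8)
The plan is to prove the stronger fact that the pair of functions $\bigl(\widetilde F-F,\ \widehat F-F\bigr)$ on $[0,1]$, with Lebesgue measure, is ordered in the \emph{convex order}; since $\int_0^1\Phi(g)\,dx$ is the expectation of $\Phi(g)$ under the uniform law on $[0,1]$, this is precisely \eqref{eq-phip}. By the standard characterisation of the convex order it suffices to check (i) the means agree, $\int_0^1(\widetilde F-F)\,dx=\int_0^1(\widehat F-F)\,dx$, and (ii) for every $\theta\in\R$, $\int_0^1\bigl(\widetilde F(x)-F(x)-\theta\bigr)_+\,dx\le\int_0^1\bigl(\widehat F(x)-F(x)-\theta\bigr)_+\,dx$. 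Indeed, an arbitrary real-valued convex $\Phi$ can be written as an affine function plus a nonnegative combination of the ramps $y\mapsto(y-\theta)_+$ and $y\mapsto(\theta-y)_+$ (integrate these against the positive Radon measure $\Phi''$ over the two half-lines determined by a base point); substituting this into \eqref{eq-phip} and interchanging the $x$- and $\theta$-integrals by Tonelli reduces the inequality to (i), to (ii), and to the mirror of (ii) with $(\,\cdot\,)_+$ replaced by $(\theta-\,\cdot\,)_+$ — and that mirror follows from (i) and (ii) since $(u)_-=(u)_+-u$. (If the right-hand side of \eqref{eq-phip} is infinite there is nothing to prove, so all integrals may be assumed finite.)

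Write $w:=\widehat F$, so that $\widetilde F=P_w$ is the $L^2$-projection of $w$ onto the closed convex cone $\M$, characterised by $\int_0^1(w-P_w)(G-P_w)\,dx\le 0$ for all $G\in\M$. Taking $G=\pm\mathbf 1_{[0,1]}$ and $G=0,\,2P_w$ (all in $\M$) gives $\int_0^1(w-P_w)\,dx=0$, which is (i); taking $G=P_w+\mathbf 1_{[t,1]}\in\M$ gives $\int_t^1(w-P_w)\,dx\le0$, so the cumulative residual $R(t):=\int_0^t(w-P_w)\,dx$ satisfies $R(t)\ge0$ on $[0,1]$ with $R(0)=R(1)=0$. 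Moreover $\int_0^1(w-P_w)P_w\,dx=0$ (cone homogeneity), and feeding in the Stieltjes representation of the nondecreasing function $P_w$ forces $R(t)=0$ for $dP_w$-almost every $t$, i.e. $R$ vanishes on the contact set where $P_w$ is not locally constant — equivalently, this is the classical structure of the greatest convex minorant $\overline{P_w}$ of $\overline w$ from \eqref{projection}--\eqref{eq-gcv}.

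For (ii), put $\psi:=F+\theta$, which is nondecreasing because $F\in\M$, and let $A:=\{x\in[0,1]:P_w(x)>\psi(x)\}$. Since $(u)_+\ge u$ and $(u)_+\ge0$, we have $\int_0^1(w-\psi)_+\,dx\ge\int_A(w-\psi)\,dx$ and $\int_0^1(P_w-\psi)_+\,dx=\int_A(P_w-\psi)\,dx$, so (ii) follows once $\int_A(w-P_w)\,dx\ge0$. Decompose $[0,1]$ into the (countably many) maximal open intervals $I_k=(a_k,b_k)$ on which $P_w$ is constant, say $P_w\equiv c_k$, together with the closed remainder $C$; by the previous paragraph $R(a_k)=R(b_k)=0$ and $w=P_w$ a.e.\ on $C$, so $\int_A(w-P_w)\,dx=\sum_k\int_{A\cap I_k}(w-c_k)\,dx$. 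Because $P_w\equiv c_k$ on $I_k$ and $\psi$ is nondecreasing, $A\cap I_k=\{x\in I_k:c_k>\psi(x)\}$ is an \emph{initial} subinterval $(a_k,t_k)$ for some $t_k\in[a_k,b_k]$, whence $\int_{a_k}^{t_k}(w-c_k)\,dx=R(t_k)-R(a_k)=R(t_k)\ge0$. Summing over $k$ gives $\int_A(w-P_w)\,dx\ge0$, which proves (ii), and hence \eqref{eq-phip}.

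The crux of the argument is locating the right intermediate statement — that \eqref{eq-phip} collapses to the single inequality $\int_A(w-P_w)\ge0$ over the super-level sets $A=\{P_w>F+\theta\}$ — together with the observation that such an $A$ meets every constancy block of $P_w$ in an initial segment, which is exactly where monotonicity of the true $F$ enters, so that the block-wise vanishing and nonnegativity of the cumulative residual $R$ can be applied. Two points require only routine care: the representation of $\Phi$ as affine-plus-ramps together with the Tonelli swap in the first paragraph, and the identity $w=P_w$ a.e.\ on the contact set (a density-point argument, or equivalently the structure theory of the greatest convex minorant). Finally, since the discrete analogue of (ii) is Theorem~1.6.1 of \cite{roberston88}, an alternative route would be to discretise $w$ and $F$ on a grid, apply the discrete result, and pass to the limit.
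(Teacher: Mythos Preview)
Your argument is correct, and it reaches the same destination as the paper's proof by a different route. The paper applies the subgradient inequality $\Phi(v)-\Phi(u)\ge(v-u)\phi(u)$ directly (tacitly taking $\Phi$ differentiable), sets $u=\widetilde F-F$ and $v=\widehat F-F$, and reduces \eqref{eq-phip} to
\[
\int_0^1\bigl(\widehat F-\widetilde F\bigr)\,\phi\bigl(\widetilde F-F\bigr)\,dx\ \ge\ 0;
\]
it then splits $[0,1]$ into the constancy intervals $U_i$ of $\widetilde F$ and, on each $U_i$, invokes the cone--projection identity (Lemma~\ref{lemma-project}) with $\mathcal C$ the nondecreasing functions on $U_i$: since $c_i$ is itself the isotonic regression of $\widehat F|_{U_i}$ and $-\phi(c_i-F)$ is nondecreasing there, \eqref{eq-lemeq3} delivers the sign. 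You instead pass through the convex-order characterisation, reducing everything to the ramp inequalities $\int_0^1(P_w-F-\theta)_+\,dx\le\int_0^1(w-F-\theta)_+\,dx$ and then to $\int_A(w-P_w)\,dx\ge0$ on the super-level sets $A=\{P_w>F+\theta\}$; your structural input is the global bound $R\ge0$ with $R=0$ at the endpoints of each constancy block, derived directly from the variational inequality by testing against $P_w+\mathbf 1_{[t,1]}$. The two arguments meet at the same local fact --- on each block $I_k$ one needs $\int_{I_k}(\widehat F-c_k)\,h\,dx\ge0$ for every nonincreasing $h$ --- but your organisation handles nonsmooth $\Phi$ without further comment, makes the role of monotone $F$ fully explicit (it is exactly what forces $A\cap I_k$ to be an initial segment), and sidesteps the need to know that the isotonic regression localises block by block.
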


For a proof, see the Appendix.

\begin{corollary}
\label{coro-1d}
Assume $x\in \R$. Let $\widetilde{F}(x)$ be an initial estimator of $F(x)$ such as a kernel estimator or a spline estimator. Let  $\widetilde{F}(x)=P_{\widehat{F}(x)}(x).$ One has
\begin{equation}
\label{eq-sup}
\sup_x|\widetilde{F}(x)-F(x)|\leq \sup_x|\widehat{F}(x)-F(x)|,
\end{equation}
and
\begin{equation}
\label{eq-lp}
\int_0^1|\widetilde{F}(x)-F(x)|^qdx\leq \int_0^1|\widehat{F}(x)-F(x)|^qdx,
\end{equation}
where $q\in[1,\infty)$.
\end{corollary}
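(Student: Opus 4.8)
The plan is to derive both inequalities from Theorem \ref{th-1d} together with the explicit form \eqref{projection} of the one-dimensional projection. The $L^q$ bound is immediate; the supremum bound requires one extra idea, since $\sup_x|\cdot|$ is not an integral functional and so \eqref{eq-phip} cannot be applied to it verbatim.

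For \eqref{eq-lp} I would apply Theorem \ref{th-1d} with $\Phi(u)=|u|^q$. For $q\in[1,\infty)$ this is convex on $\R$ (it is the composition of the convex nondecreasing map $t\mapsto t^q$ on $[0,\infty)$ with the convex map $u\mapsto|u|$), so \eqref{eq-phip} reads exactly as \eqref{eq-lp}. Nothing further is needed here.

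For \eqref{eq-sup} I would use a sandwiching (order-preservation) argument instead of a single convex $\Phi$. Put $M=\sup_x|\widehat F(x)-F(x)|$, which is finite because a kernel or spline estimator is bounded on $[0,1]$ and $F$, being monotone, is bounded. Then
\begin{equation*}
F(x)-M\;\le\;\widehat F(x)\;\le\;F(x)+M\qquad\text{for every }x\in[0,1].
\end{equation*}
Since $F$ is nondecreasing, both $F-M$ and $F+M$ lie in $\mathcal M$, so the projection fixes them, $P_{F\pm M}=F\pm M$; and from \eqref{projection} the map $w\mapsto P_w$ is order-preserving, because $w_1\le w_2$ pointwise gives $\frac{1}{v-u}\int_u^v w_1(t)\,dt\le\frac{1}{v-u}\int_u^v w_2(t)\,dt$ for all $u\le x\le v$, hence $P_{w_1}(x)\le P_{w_2}(x)$. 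Applying $P$ to the displayed inequalities yields
\begin{equation*}
F(x)-M\;=\;P_{F-M}(x)\;\le\;P_{\widehat F}(x)=\widetilde F(x)\;\le\;P_{F+M}(x)=F(x)+M,
\end{equation*}
i.e. $|\widetilde F(x)-F(x)|\le M$ for all $x$, which is \eqref{eq-sup} (in fact a slightly stronger pointwise statement).

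The corollary is therefore essentially a corollary; the only things needing care are (i) convexity of $|u|^q$ for every $q\ge1$ (routine) and (ii) the facts that $P$ is monotone and fixes monotone functions, both read off directly from \eqref{projection}, so there is no real obstacle. If one wishes to stay entirely within Theorem \ref{th-1d} and not invoke \eqref{projection}, an alternative derivation of \eqref{eq-sup} is to apply \eqref{eq-phip} with the convex function $\Phi(u)=(|u|-M)_+$: the right-hand side of \eqref{eq-phip} then vanishes, forcing $\int_0^1(|\widetilde F(x)-F(x)|-M)_+\,dx=0$, hence $|\widetilde F-F|\le M$ almost everywhere, which upgrades to everywhere on passing to continuous versions of the estimators.
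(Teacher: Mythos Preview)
Your argument is correct. For \eqref{eq-lp} you do exactly what the paper does: take $\Phi(u)=|u|^q$ in Theorem~\ref{th-1d}.

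For \eqref{eq-sup} your route differs from the paper's. The paper stays within Theorem~\ref{th-1d} and simply lets $q\to\infty$ in \eqref{eq-lp}, using that on a finite measure space $\|f\|_q\to\|f\|_\infty$; it then remarks that a separate proof appears in \cite{Lin23022014}. You instead bypass Theorem~\ref{th-1d} entirely for the sup bound and use the explicit formula \eqref{projection} to read off that $w\mapsto P_w$ is order-preserving and fixes $\mathcal M$, then sandwich $\widehat F$ between $F-M$ and $F+M$. Both are valid; the paper's limit is one line but yields only the essential supremum (which must then be upgraded via continuity), whereas your sandwiching gives the genuine pointwise bound $|\widetilde F(x)-F(x)|\le M$ directly and is arguably more transparent. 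Your alternative with $\Phi(u)=(|u|-M)_+$ is also fine and is closer in spirit to the paper's strategy of staying inside Theorem~\ref{th-1d}, though it too lands on an a.e.\ statement first. In short, you overlooked the cheapest route (pass to the limit $q\to\infty$), but what you wrote is correct and in one respect slightly sharper.
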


\begin{proof}
Taking $\Phi=\|x\|^q$ in Theorem \ref{th-1d}, \eqref{eq-lp} follows. As $q\rightarrow \infty$, \eqref{eq-sup} holds.
Note that a different proof for \eqref{eq-sup} is given in \cite{Lin23022014}.
\end{proof}


The following proposition concerns the asymptotic distribution of the projection, which follows from the general results of Theorem 3.4 in \cite{frechref}.
\begin{proposition}
Let the tangent cone of ${\mathcal M}$ be $T_{\mathcal M}(F)$, which is  the closure of $\{\lambda(G-F), G\in \mathcal M, \lambda>0\}$.
If
\begin{align*}
\dfrac{\widehat{F}(x)-F(x)}{t_n}\xrightarrow{\mathcal{L}} U(x)
\end{align*} for some $t_n\rightarrow 0$ as $n\rightarrow \infty$,
then
\begin{align}
\dfrac{\widetilde{F}(x)-F(x)}{t_n}\xrightarrow{\mathcal L} \widetilde P_U(x),
\end{align}
where $\widetilde P_U(x)$ is the projection of $U(x)$ onto the tangent cone   $T_{\mathcal M}(F)$ and $\xrightarrow{\mathcal L}$ indicates convergence in distribution.
\end{proposition}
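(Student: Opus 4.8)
The plan is to deduce this from the general Fréchet-differentiability machinery of Theorem 3.4 in \cite{frechref}, so the proof is essentially a verification that the hypotheses of that theorem are met in our setting. First I would recall the abstract framework: $\mathcal M$ is a closed convex cone in the Hilbert space $L^2([0,1]^p)$ (this was already observed in the excerpt), and the metric projection $P:L^2\to\mathcal M$, $w\mapsto P_w$, is the nearest-point map onto this cone. The key structural fact is that the metric projection onto a closed convex cone is directionally (Hadamard) differentiable, and its derivative at a point $F\in\mathcal M$ in the direction $u$ is precisely the metric projection of $u$ onto the tangent cone $T_{\mathcal M}(F)=\overline{\{\lambda(G-F):G\in\mathcal M,\ \lambda>0\}}$. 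This is the statement that $P$ is Hadamard directionally differentiable at $F$ with derivative $P'_F = \Pi_{T_{\mathcal M}(F)}$, the projection onto the tangent cone; it is exactly the content invoked by Theorem 3.4 of \cite{frechref} for convex-constrained estimation.

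Next I would apply the functional delta method in its Hadamard-directional form. We are given $t_n^{-1}(\widehat F - F)\xrightarrow{\mathcal L} U$ in $L^2([0,1]^p)$ with $t_n\to 0$, and $F\in\mathcal M$ so that $P_F = F$. Writing
\begin{align*}
\frac{\widetilde F - F}{t_n} = \frac{P_{\widehat F} - P_F}{t_n} = \frac{P_{F + t_n\,(t_n^{-1}(\widehat F - F))} - P_F}{t_n},
\end{align*}
the Hadamard directional differentiability of $P$ at $F$ together with the continuous mapping / delta method for directionally differentiable maps (again, the exact mechanism of Theorem 3.4 in \cite{frechref}) gives
\begin{align*}
\frac{\widetilde F - F}{t_n} \xrightarrow{\mathcal L} P'_F(U) = \Pi_{T_{\mathcal M}(F)}(U) = \widetilde P_U,
\end{align*}
which is the claimed conclusion. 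One should be slightly careful to state the mode of convergence: the limit $U$ is an $L^2$-valued random element, convergence is in the $L^2([0,1]^p)$ topology, and the projection map onto the tangent cone is (globally Lipschitz, hence) continuous, so the limiting object $\widetilde P_U$ is a well-defined random element of $T_{\mathcal M}(F)$.

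I expect the main obstacle — or at least the only non-bookkeeping point — to be justifying that the delta method applies with $U$ taking values in an infinite-dimensional space: the classical delta method for Hadamard directionally differentiable maps requires the derivative map to be defined and continuous on the whole space (here automatic, since metric projection onto a closed convex set in a Hilbert space is $1$-Lipschitz everywhere) and requires the limit variable $U$ to be supported on a separable subset, which holds since $L^2([0,1]^p)$ is separable. Given that \cite{frechref} has already packaged precisely this — closed convex cone constraints in a Hilbert space, with the tangent-cone projection as the derivative — the cleanest write-up is simply to check that $(\mathcal M, L^2([0,1]^p))$ satisfies the hypotheses of their Theorem 3.4 and then quote it; no new estimates are needed. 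A secondary, purely expository, point worth a sentence is to note that $T_{\mathcal M}(F)$ is itself a closed convex cone, so that $\widetilde P_U$ is unambiguously defined, and that when $F$ lies in the interior of $\mathcal M$ in the relevant sense the tangent cone is all of $L^2$ and $\widetilde P_U = U$, recovering the unconstrained limit.
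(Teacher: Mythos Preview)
Your approach is exactly the same as the paper's: the paper does not give an independent proof but simply states that the proposition follows from the general results of Theorem~3.4 in \cite{frechref}. Your write-up is a faithful (and more detailed) unpacking of that citation, verifying the closed-convex-cone and Hadamard-differentiability hypotheses and then invoking the tangent-cone-projection delta method; nothing further is needed.
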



\begin{remark}
Let $\mathcal M_1,\ldots, \mathcal M_p$ be $p$ convex cones of functions such that $\mathcal{M}_k$ is the convex cone of functions which are monotone with respect to the $k$th direction of $x$.  That is, for any $F\in \mathcal M_k$,  $F(x_1,\ldots, x_p)$ is monotone with respect to $x_k$ at any fixed value of the other coordinates.  It is not difficult to see that $\mathcal M$ is the intersection of the $p$ convex cones  $\mathcal M_1,\ldots, \mathcal M_p$, i.e., $\mathcal M=\cap_{k=1}^p\mathcal M_k$.  The projection algorithm (Algorithm 1) can be viewed as a sequential projection of a $p$-variate function onto each convex cone $\mathcal M_k$,  while at each step the projected function is adjusted by adding residual sequences from the previous step.  Note that similar algorithms consisting of a projection step and an adjustment step hold for any space that can be written as the intersection of a collection of  convex cones (see \cite{general-proj}).

\end{remark}

%
%
%

\section{Bootstrap confidence intervals}
\label{sec:newSec3}

In this section we appeal to the bootstrap \cite{DaHi97} for constructing confidence intervals from our estimators.
Consider the following procedure for generating a nonparametric estimate, along with confidence intervals, for a monotone function $F(x)$.  The data are $(x_i, y_i)$ pairs, $i=1, \ldots, n$.
\begin{enumerate}
\item Fit the curve to obtain an initial estimator $\widehat{F}$  which need not satisfy the monotonicity constraint(s).
\item Project the estimate onto $\mathcal{M}$, the space of monotone functions.  Call this projected estimate $\widetilde F$, and let $e_i = y_i - \widetilde F(x_i)$ be the residual of the $i$th point from the monotone estimate.
\item Repeat the following procedure  $B =2000$ times.
\begin{enumerate}
\item Resample the residuals with replacement, yielding $e^\star_1, \ldots, e^\star_n$.
\item Set $y^\star_i = \widetilde F (x_i) + e^\star_i$.
\item Apply steps 1 and 2 to the bootstrapped points $(x_i, y_i^\star)$ to yield $\widetilde F^\star$.
\end{enumerate}
\item Collect the $B$ different bootstrapped estimates $\widetilde F^\star$.  We adopt percentile-based methods for constructing point-wise confidence intervals \cite{DiRo88}.  For example, the lower and upper
 95\% confidence bounds  of $F(x)$ are given by the 2.5 percentile and 97.5 percentile  of the ranked bootstrapped estimates, respectively.
 \end{enumerate}

In the next section,
both a simulation study and a real data example are presented to illustrate use of the bootstrap procedure
for constructing 95\%  pointwise  confidence intervals on the monotone regression function in Section \ref{sec-simu}.

\section{Simulation studies and data analysis} \label{sec:newSec4}
\label{sec-simu}
In this section we report the results of a series of short Monte Carlo simulations used to gauge selected operating characteristics of the projection estimator.  We also illustrate the methodology using a contemporary data set from environmental toxicology.

\subsection{Root mean squared errors}
\label{sec:simuRMSE}


To study the various features of our projection estimators, we simulated data from a variety of one-, two-, and three-predictor monotone regression models under a normal parent distribution. For example, with one predictor, $x_1$, we generated $y_i \sim \text{N}\big\{F(x_{i1}),\sigma^2\big\}, \ i=1,\ldots,n$ with $n$ set to $100$, where the covariates were taken to be equidistant on their domain,  and the mean function $F(x_1)$ was chosen from a class of monotone curves originally proposed by \cite{holmes} and \cite{Neelon}:
\begin{itemize}
  \item [(a)] flat function, $F_{11}(x_1)=3$, $x_1\in(0,10]$;
  \item [(b)] sinusoidal function, $F_{12}(x_1)=0.32\{x_1+\sin(x_1)\}$, $x_1\in(0,10]$;
  \item [(c)] step function, $F_{13}(x_1)=3$ if $x_1\in(0,8]$ and $F_3(x_1)=6$ if $x_1\in (8,10]$;
  \item [(d)] linear function, $F_{14}(x_1)=0.3x_1$, $x_1\in(0,10]$;
  \item [(e)] exponential function, $F_{15}(x_1)=0.15\exp\{0.6x_1-3\}$, $x_1\in(0,10]$;
  \item [(f)] logistic function, $F_{16}(x_1)=3/\left(1+\exp\{-2x_1+10\}\right)$, $x_1\in(0,10]$.
  \item[(g)] half-normal function, $F_{17}(x_1)=3\exp\left\{-\frac12(0.02)^2(0.1x_1-1)^2\right\}$,  $x_1\in(0,10]$.
  \item[(h)] mixture function, $F_{18}(x_1)=6F(0.1x_1)$, where $F(\cdot)$ is the c.d.f.~from the equal mixture of $N(0.25, 0.004^2)$ and $N(0.75, 0.04^2)$.
\end{itemize}
The standard deviation parameter $\sigma$ was varied over a range of $\sigma = 0$ (i.e., a purely deterministic response)$,0.1, 0.2, \ldots, 1.2, 1.3$.  These same models were also employed at $n=100$ and $\sigma=1$ in a comparative study by \cite{shiv09}.

Each simulation was replicated 50 times for every value of $\sigma$, and the square root of the empirical root mean squared error (RMSE) for the projection estimator was recorded using either a kernel estimator or a cubic spline estimator for the initial $\widehat{F}(x)$.  More precisely, the RMSE is defined as
\begin{align*}
\text{RMSE}(\widehat{F}(x))=\sqrt{\frac{1}{n}\sum_{i=1}^n \{\widehat{F}(x)-F(x)\}^2}.
\end{align*}
These resultant RMSEs using either initial estimator were then averaged over the 50 replicate simulation trials for each model.  The results are summarized in Figure \ref{fig-1d} for the initial kernel estimator (top) and the initial spline estimator (bottom).  One sees that the average RMSE rises gradually but consistently with increasing $\sigma$, and that these errors are substantially higher with both the step function, $F_{13}(x)$, and the mixture function, $F_{18}(x)$, models.  By contrast, the flat function, $F_{11}(x)$, and half-normal function, $F_{17}(x)$, models separate out with lower average RMSEs as $\sigma$ rises.  Also, for many values of $\sigma$, the average spline-based RMSE exceeds the average kernel-based RMSE.
Table \ref{table-1d} gives further detail on the one-predictor RMSE simulations, by reporting the average RMSEs and the standard error of the  RMSES at $\sigma = \frac12$ and $\sigma = 1$ across all models. Note that the standard errors of the 50 RMSE values are quite small, indicating that there is small variability for the average RMSE.


\begin{figure}[!h]
\caption{Average root mean squared error (RMSE) for one-dimensional projection estimates based on initial kernel estimator (top) and initial spline estimator (bottom) for $\widetilde{F}(x)$. Digits correspond to model numbers (second digits) in Sec.~\ref{sec:simuRMSE}.}
\includegraphics[width=9cm]{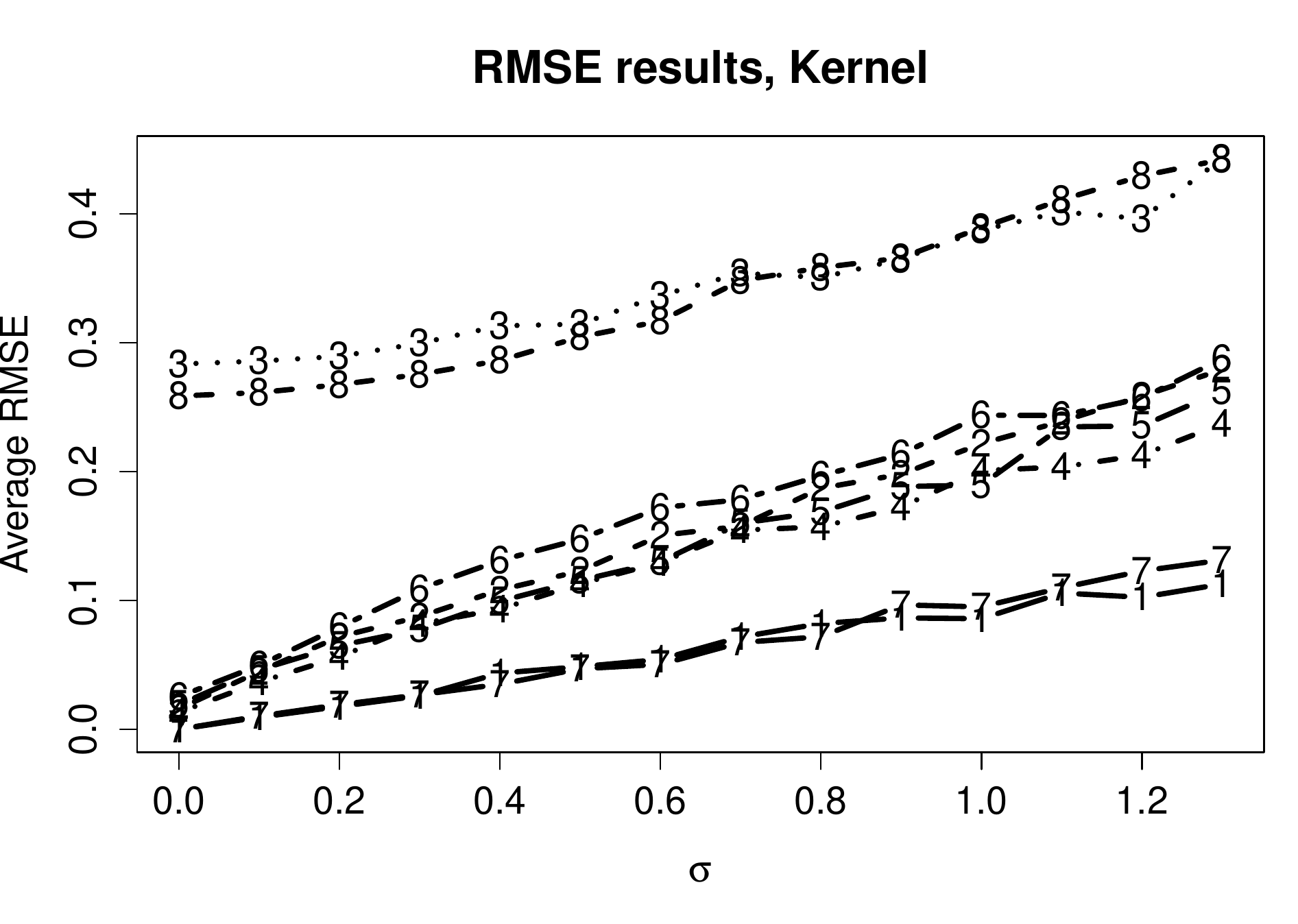}
\includegraphics[width=9cm]{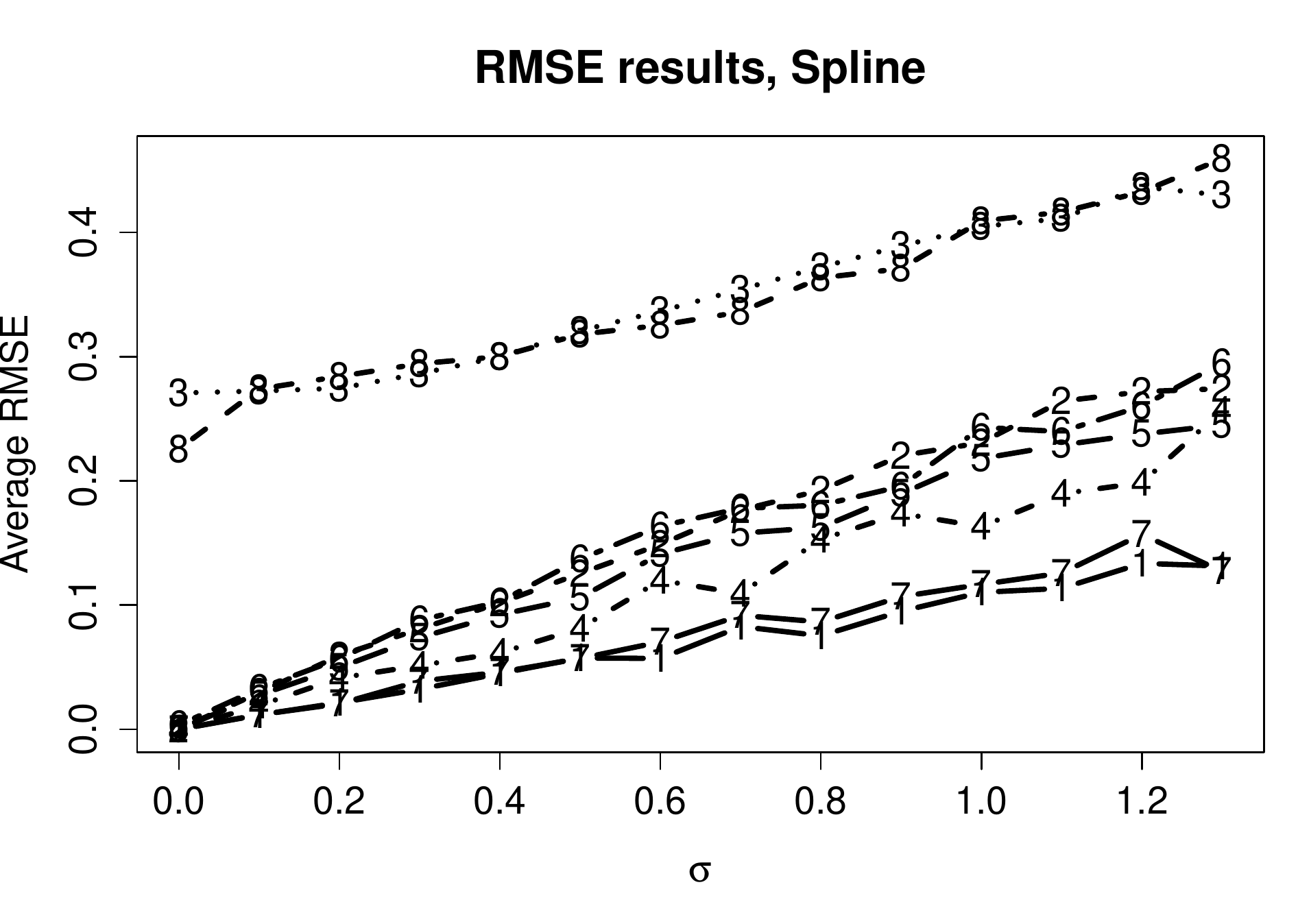}
\label{fig-1d}
\end{figure}

\begin{table}[!h]
\label{table-1d}
%
\caption{The mean and standard error of one-predictor root mean square errors (RMSEs) for simulated data at $n=100$, averaged across 50 simulation replicates, listed by underlying mean function $F(x_1)$ from Sec.~\ref{sec:simuRMSE} and standard deviation parameter $\sigma$.}
\begin{tabular}{lllcccc}
Mean function, $F(x_1)$  & Initial estimator & & $\sigma=\frac12$ &   $\sigma=\frac12$  & $\sigma=1$ &$\sigma=1$\\
\hline
                                                 &                           &&            Average                  &    SE        &  Average         &  SE    \\
\hline
flat, $F_{11}(x_1)$        & Kernel            & & 0.0485    &  0.0048      & 0.0858 & 0.0101  \\
                           & Spline            & & 0.0573        &  0.0053 & 0.1102  & 0.0150 \\ \\
sinusoidal, $F_{12}(x_1)$  & Kernel            & & 0.1228  & 0.0049    & 0.2217  & 0.0080\\
                           & Spline            & & 0.1255         & 0.0063  & 0.2307& 0.0099  \\ \\
step, $F_{13}(x_1)$        & Kernel            & & 0.3144     & 0.0072     & 0.3875 & 0.0113 \\
                           & Spline            & & 0.3209  &   0.0043      & 0.4048 & 0.0114 \\ \\
linear, $F_{14}(x_1)$      & Kernel            & & 0.1129    &  0.0053      & 0.2009 & 0.0106 \\
                           & Spline            & & 0.0813         & 0.0074 & 0.1631 & 0.0159 \\ \\
exponential, $F_{15}(x_1)$ & Kernel            & & 0.1155  & 0.0047        & 0.1895 & 0.0110 \\
                           & Spline            & & 0.1060           &0.0053 & 0.2179& 0.0159  \\ \\
logistic, $F_{16}(x_1)$    & Kernel            & & 0.1475          & 0.0051 & 0.2438 & 0.0103 \\
                           & Spline            & & 0.1369   & 0.0055       & 0.2430& 0.0111  \\ \\
half-normal, $F_{17}(x_1)$ & Kernel            & & 0.0470    &   0.0048    & 0.0950  & 0.0087\\
                           & Spline            & & 0.0570           &  0.0060 & 0.1166 & 0.0144 \\ \\
mixture, $F_{18}(x_1)$     & Kernel            & & 0.3045    & 0.0040      & 0.3893 & 0.0086 \\
                           & Spline            & & 0.3180           & 0.0042 & 0.4092 & 0.0086 \\
\hline
\end{tabular}
\end{table}



For the two-predictor setting with $x_1$ and $x_2$, we again generated $y_i \sim \text{N}\big\{F(x_{i1},x_{i2}),\sigma^2\big\}, \ i=1,\ldots,100$, with $\sigma$ ranging over $\sigma = 0,0.1,\ldots,1.3$. Now, $(x_1, x_2)\in [0,1]\times [0,1]$ and the two-dimensional mean functions were taken from a study considered in \cite{Saarela}:
 \begin{itemize}
  \item [(a)] $F_{21}(x_1, x_2)=\sqrt{x_1}$;
  \item [(b)] $F_{22(}x_1, x_2)=0.5x_1+0.5x_2$;
  \item [(c)] $F_{23}(x_1, x_2)=\min(x_1, x_2)$;
  \item [(d)] $F_{24}(x_1, x_2)=0.25x_1+0.25x_2+0.5\times 1_{\{x_1+x_2>1\}}$;
  \item [(e)] $F_{25}(x_1, x_2)=0.25x_1+0.25x_2+0.5\times 1_{\{\min(x_1, x_2)>0.5\}}$;
  \item [(f)] $F_{26}(x_1, x_2)=1_{\{ (x_1-1)^2+(x_2-1)^2<1  \} }\times \sqrt{1-(x_1-1)^2-(x_2-1)^2}.$
  \end{itemize}
The consequent average RMSEs are plotted in Figure \ref{fig-2d}, separated by initial kernel estimator (top) and initial spline estimator (bottom).   The broad patterns appear similar to those seen with one predictor, although there is less separation/distinguishablity in the RMSEs across models.
Also, the spline-based RMSEs are often much closer to their kernel-based cousins, although for large $\sigma$ the former generally still exceed the latter.  
Table \ref{mse-2d} reports the average RMSE values at $\sigma=\frac12$ and $\sigma=1$.

\begin{figure}[!h]
\caption{Average root mean squared error (RMSE) for two-dimensional projection estimates based on initial kernel estimator (top) and initial spline estimator (bottom) for $\widetilde{F}(x)$. Digits correspond to model numbers (second digits) in Sec.~\ref{sec:simuRMSE}.}
\includegraphics[width=9cm]{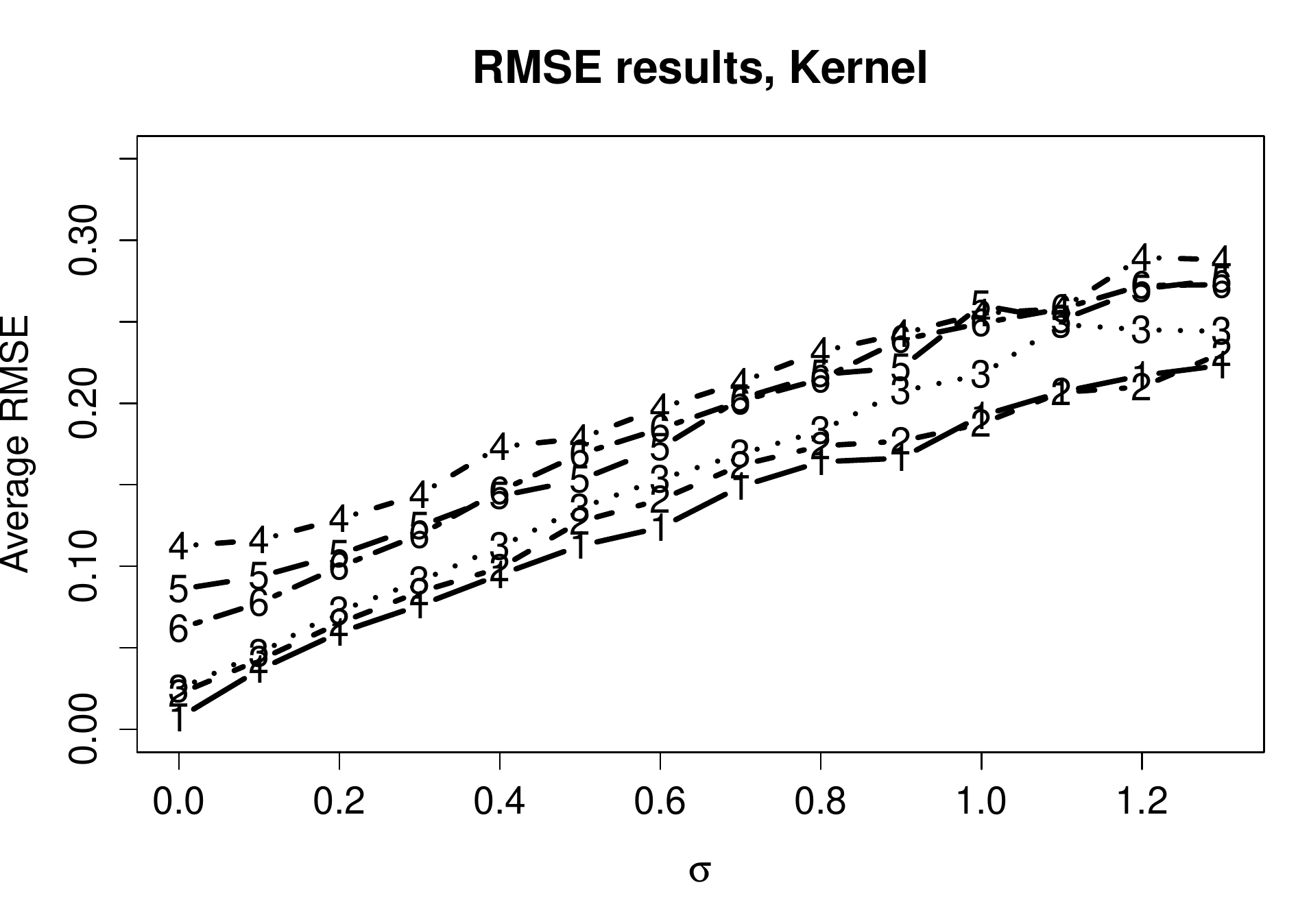}
\includegraphics[width=9cm]{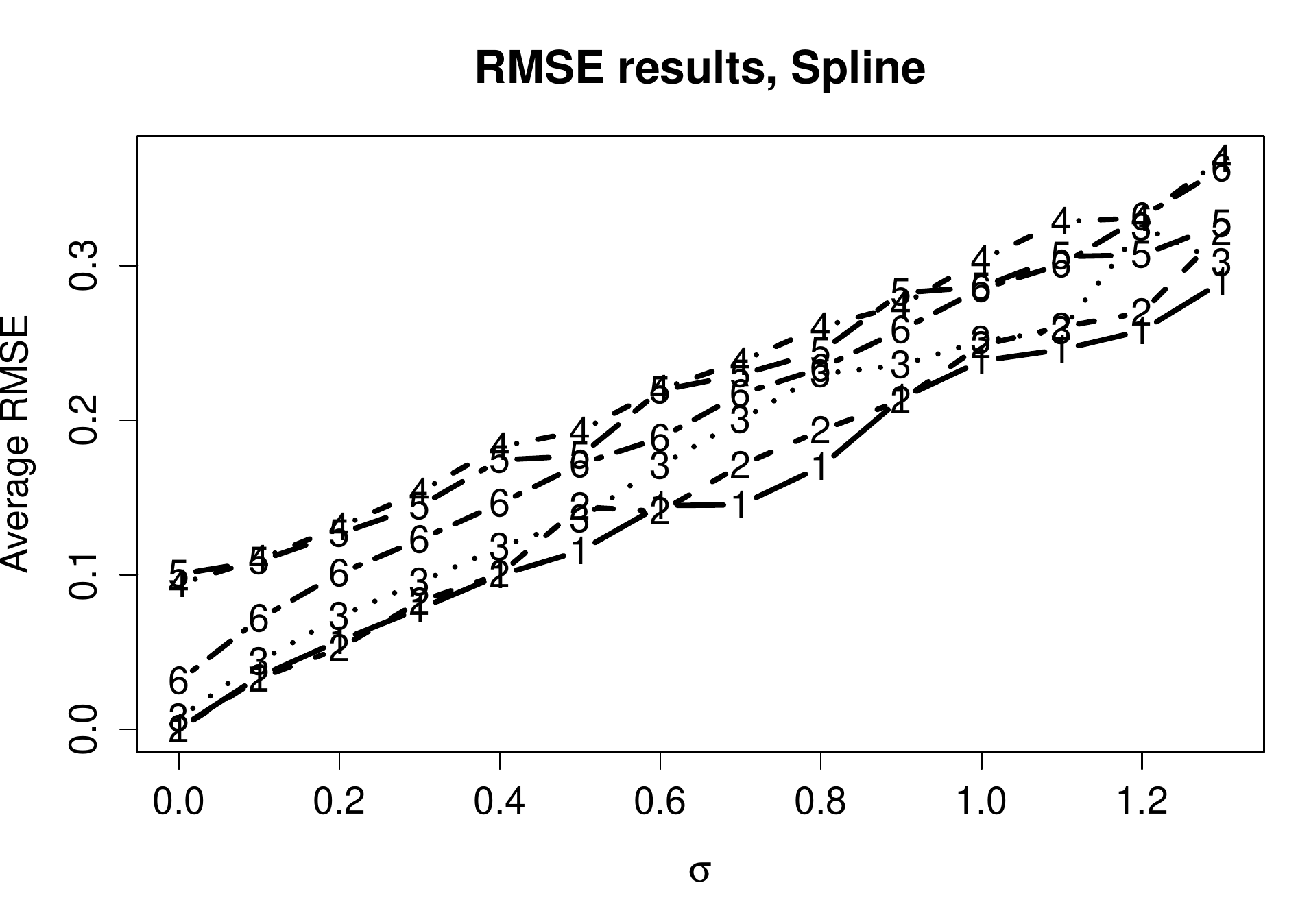}
\label{fig-2d}
\end{figure}
  \begin{table}[!h]
%
\caption{The mean and standard error of two-predictor root mean square errors (RMSEs) for simulated data at $n=100$, averaged across 50 simulation replicates, listed by underlying mean function $F(x_1, x_2)$ from Sec.~\ref{sec:simuRMSE} and standard deviation parameter $\sigma$.}
\begin{tabular}{lllcccc}
Mean function, $F(x_1, x_2)$  & Initial estimator & & $\sigma=\frac12$ &   $\sigma=\frac12$  & $\sigma=1$ &$\sigma=1$\\
\hline
                                                 &                           &&            Average                  &    SE       &  Average         &  SE   \\
\hline
$F_{21}(x_1, x_2)$            & Kernel            & & 0.1125   &      0.0074           & 0.1923 & 0.0158  \\
                                           & Spline            & & 0.1153           & 0.0075  &    0.2383  & 0.0138 \\ \\
$F_{22}(x_1, x_2)$            & Kernel            & & 0.1274    &  0.0054        & 0.1874 & 0.0137 \\
                              & Spline            & & 0.1439      &  0.0054 &      0.2484  & 0.0117 \\ \\
$F_{23}(x_1, x_2)$            & Kernel            & & 0.1356   &  0.0065       & 0.2175 & 0.0111  \\
                              & Spline            & & 0.1360      &  0.0058      & 0.2514  & 0.0220 \\ \\
$F_{24}(x_1, x_2)$            & Kernel            & & 0.1783  &  0.0071          & 0.2551 & 0.0130 \\
                              & Spline            & & 0.1929       & 0.0055  &     0.3036  & 0.0010 \\ \\
$F_{25}(x_1, x_2)$            & Kernel            & & 0.1530    &   0.0056       & 0.2601 &  0.0137\\
                              & Spline            & & 0.1768         &  0.0051 &    0.2861 & 0.0149 \\ \\
$F_{26}(x_1, x_2)$            & Kernel            & & 0.1687      &  0.0062       & 0.2491 & 0.0116  \\
                              & Spline            & & 0.1715     &  0.0055 &        0.2849  & 0.0116 \\
\hline
\end{tabular}
\label{mse-2d}
\end{table}

Lastly, we simulated $n = 100$ data points from models whose true regression functions involve three predictors, $x_1, x_2, x_3$, where each $x_j\in [0,10]$, and again using a Normal model with constant standard deviation $\sigma$ ranging over $\sigma = 0,0.1,\ldots,1.3$.  For the underlying mean monotone functions we employed the following collection:
       \begin{itemize}
  \item [(a)] $F_{31}(x_1, x_2, x_3)=0.15(x_1+x_2+x_3)$;
  \item [(b)] $F_{32}(x_1, x_2, x_3)=0.5x_1x_2x_3$;
  \item [(c)] $F_{33}(x_1, x_2, x_3)=\min(x_1, x_2, x_3)$;
  \item [(d)] $F_{34}(x_1, x_2, x_3)=1/\left[1+\exp\{-(x_1+x_2+x_3)\}\right]$;
  \item [(e)] $F_{35}(x_1, x_2, x_3)=\exp\{0.01x_1+0.1\sqrt{x_2}\}+\sin(x_3/5)$;
    \end{itemize}
Fifty replicate trials were generated under each model configuration, and from these the average RMSEs were calculated.  These are plotted as a function of $\sigma$ in Figure \ref{fig-3d}, again separated by initial kernel estimator (top) and initial spline estimator (bottom).   The broad patterns appear more similar to those seen in the one-predictor setting.
Table \ref{3dmse} reports the average RMSE values at $\sigma=\frac12$ and $\sigma=1$.
\begin{figure}[!ht]
\caption{Average root mean squared error (RMSE) for three-dimensional projection estimates based on initial kernel estimator (top) and initial spline estimator (bottom) for $\widetilde{F}(x)$. Digits correspond to model numbers (second digits) in Sec.~\ref{sec:simuRMSE}.}
\includegraphics[width=9cm]{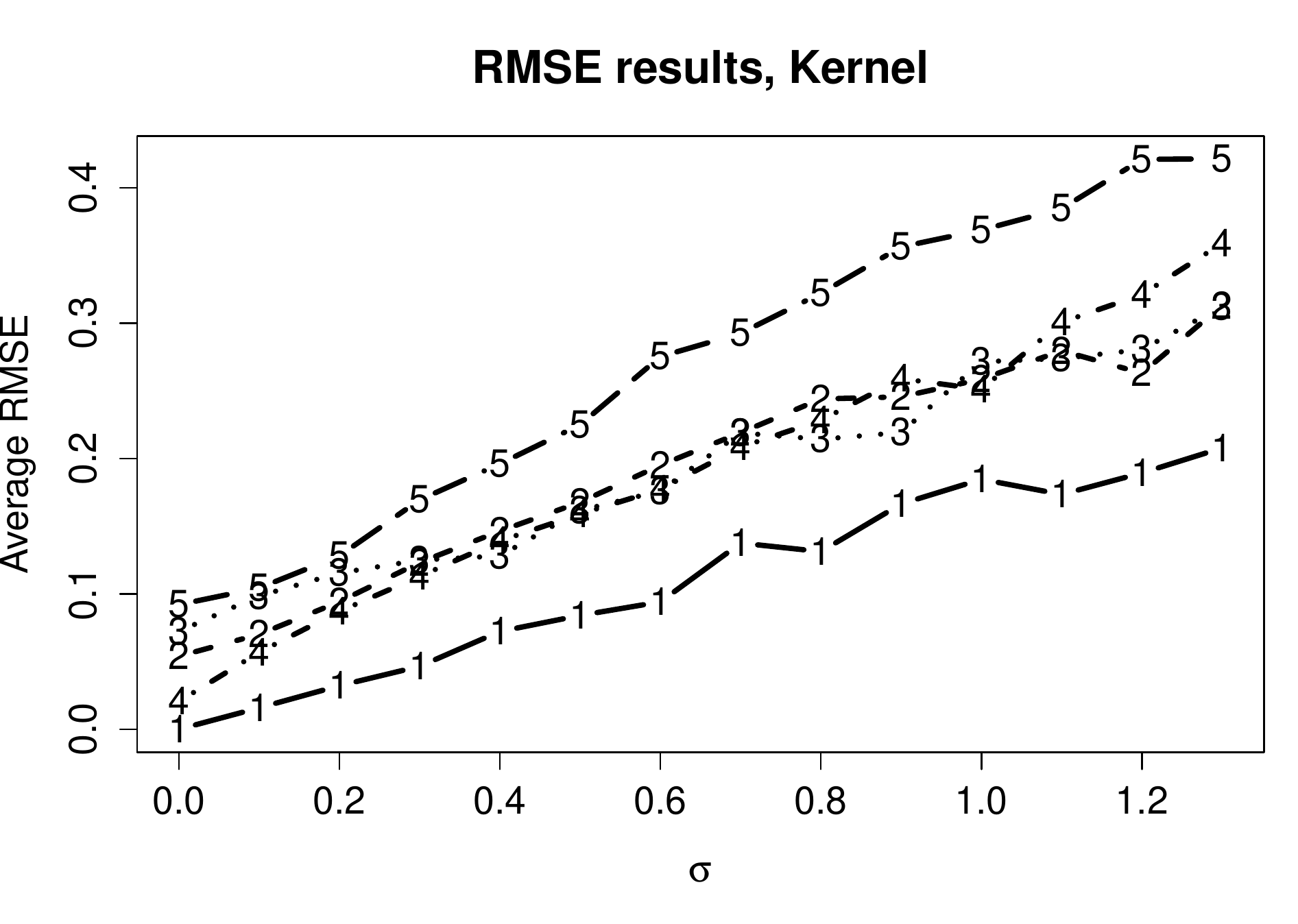}
\includegraphics[width=9cm]{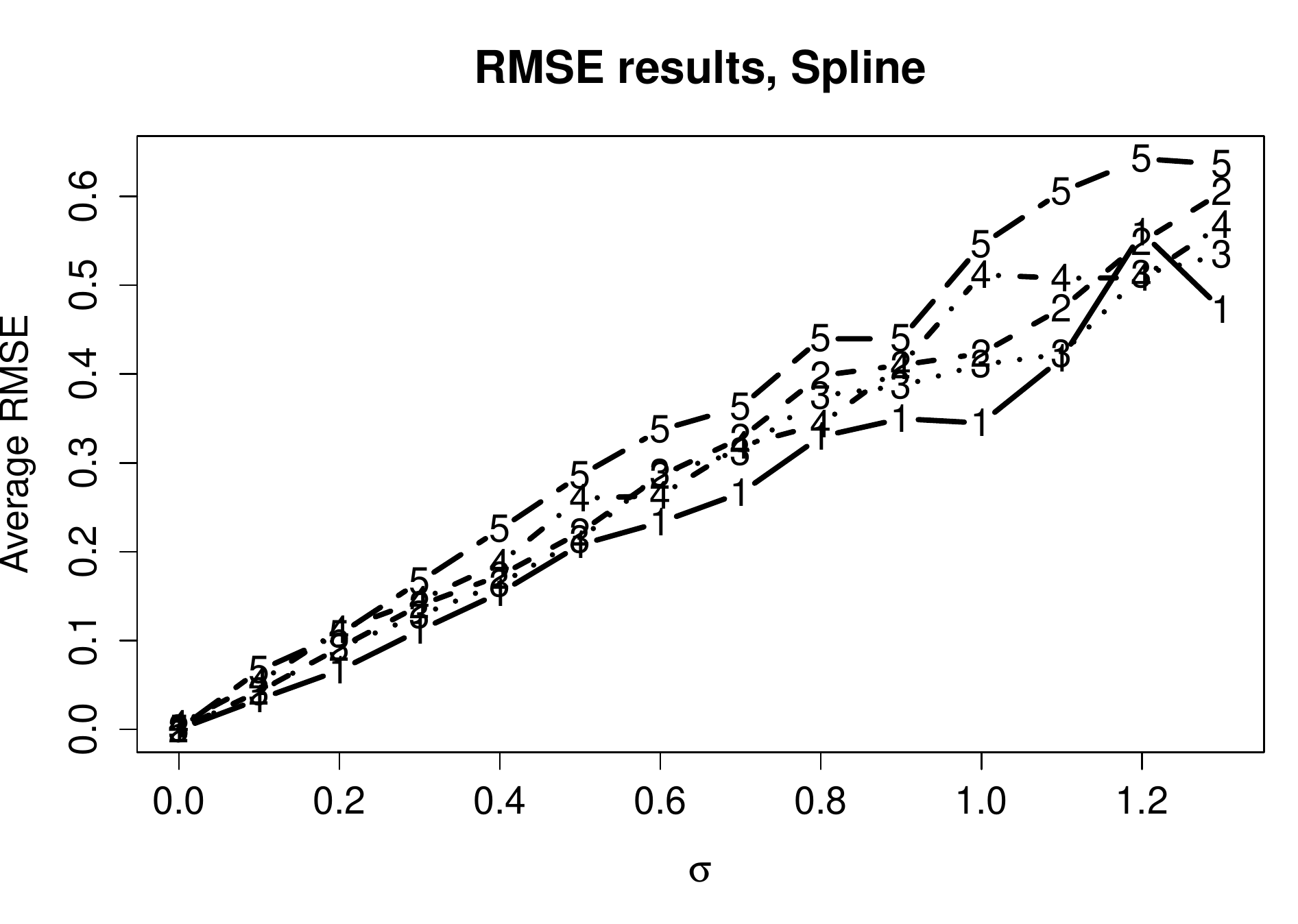}
\label{fig-3d}
\end{figure}
  \begin{table}[!h]
%
\caption{The mean and standard error of three-predictor root mean square errors (RMSEs) for simulated data at $n=100$, averaged across 50 simulation replicates, listed by underlying mean function $F(x_1, x_2, x_3)$ in  Sec.~\ref{sec:simuRMSE} and standard deviation parameter $\sigma$.}
\begin{tabular}{lllcccc}
Mean function, $F(x_1, x_2, x_3)$  & Initial estimator & & $\sigma=\frac12$ &   $\sigma=\frac12$  & $\sigma=1$ &$\sigma=1$\\
\hline
                                                 &                           &&            Average                  &    SE       &  Average         &  SE  \\
\hline
$F_{31}(x_1, x_2, x_3)$               & Kernel            &  & 0.0841  &  0.0085       & 0.1846  & 0.0147  \\
                                   & Spline            & & 0.2078  &     0.0133    & 0.3449 & 0.0262 \\ \\
$F_{32}(x_1, x_2, x_3)$               & Kernel            & & 0.1677&  0.0052         & 0.2586 & 0.0110 \\
                                   & Spline            & & 0.2216 &    0.0130      & 0.4232&  0.0216 \\ \\
$F_{33}(x_1, x_2, x_3)$               & Kernel            & & 0.1616&    0.0376       & 0.2714&  0.0464 \\
                                   & Spline            & & 0.2138&    0.0136       & 0.4108 &  0.0158\\ \\
$F_{34}(x_1, x_2, x_3)$               & Kernel            & & 0.1590&   0.0053        & 0.2511& 0.0144  \\
                                   & Spline            & & 0.2600&     0.0113      & 0.5119 &0.0175 \\ \\
$F_{35}(x_1, x_2, x_3)$               & Kernel            & & 0.2249 &       0.0065   & 0.3687& 0.0117 \\
                                   & Spline            & & 0.2857 &      0.0088    & 0.5454 & 0.0281  \\
\hline
\end{tabular}
\label{3dmse}
\end{table}

\subsection{Bootstrap interval coverage}
\label{sec-simCI}

We also used our simulation approach to explore the pointwise coverage characteristics of the bootstrap confidence intervals from Section \ref{sec:newSec3}.  Following the same procedures described in Section \ref{sec:simuRMSE}, we generated pseudo-random data $y_i \sim \text{N}\big\{F(x_{i1}),\sigma^2\big\}, \ i=1,\ldots,100$, where the one-predictor mean function $F(x_1)$ was chosen as either the sinusoidal (F$_{12}$) or logistic (F$_{16}$) form listed above. The standard deviation parameter was set to $\sigma = 1$.
The predictor variable was again taken over $x_1 \in [0,10)$.  We also generated pseudo-random two-predictor data: $y_i \sim \text{N}\big\{F(x_{i1}, x_{i2}),\sigma^2\big\}, \ i=1,\ldots,100$, where the two-predictor mean function $F(x_{i1},  x_{i2})$ was chosen as either function F$_{21}$ or F$_{26}$ from above.  As there, the predictor variables were taken over the unit square. For both the one-predictor and two-predictor settings, 2000 samples were generated at each parameter configuration.

To study pointwise coverage in the one-predictor case, we evaluated how often out of the 2000 replicate simulations the bootstrap intervals contained the true mean response at a series of values for $x_1$ over the range $0.5, 1.5, 2.5, 3.5, 5.5, \ldots, 9.5$.  We set the nominal pointwise confidence level to 95\%.  The empirical coverage rates appear in Table \ref{empircal1d}, where we see the bootstrap procedure generally contains the true mean function at or near the pointwise nominal level.  The few cases where rates drop substantively below nominal occur when the mean function turns sharply (i.e., $x=7.5$ or $x=8.5$ for the sinusoidal function $F_{12}$ and $x=3.5$ or $x=5.5$ for the logistic function $F_{16}$). In all the studies, an initial
kernel
estimate is used.
\begin{table}[!ht]
%
\caption{One-predictor empirical coverage rates ($\times 100$) based on 2000 simulation replicates, each at sample size of $n=100$.  Rates are stratified by underlying mean function $F(x_1)$, and are pointwise at each of the listed values of the predictor variable.  Nominal confidence level is 95\%.}
\begin{tabular}{lccccccccc}
                 & \multicolumn{9}{c}{$x_1$} \\
\cline{2-10}
Mean function, $F(x_1)$   & 0.5  & 1.5  & 2.5  & 3.5  & 5.5  & 6.5  & 7.5  & 8.5 & 9.5 \\
\hline
sinusoidal, $F_{12}(x_1)$ & 82.4 & 97.0 & 97.6 & 95.0 & 98.1 & 94.2 & 85.2 & 78.3 & 96.2\\
logistic, $F_{16}(x_1)$   & 98.1 & 98.5 & 90.6 & 79.8 & 75.4 & 84.7 & 86.8 & 98.1 & 99.7  \\
\hline
\end{tabular}
\label{empircal1d}
\end{table}

In similar fashion, we calculated pointwise empirical coverage with the two-predictor models over a series of $(x_1, x_2)$ pairs in the unit square.  Table \ref{empircal2d} displays the predictor pairs and the consequent coverage rates.  We find that the bootstrap procedure again generally contains the true mean function pointwise values at or near the nominal level.  Some degradation in coverage is seen at the origin $(0, 0)$, and for $F_{21}(x_1,x_2)$ also at the corner points $(0, 1)$ and $(1, 0)$.

  \begin{table}[!h]
%
\caption{Two-predictor empirical coverage rates ($\times 100$) based on 2000 simulation replicates, each at sample size of $n=100$.  Rates are stratified by underlying mean function $F(x_1)$, and are pointwise at each of the listed pairings of the predictor variables.  Nominal confidence level is 95\%.}
\begin{tabular}{clccccc}
      & & \multicolumn{5}{c}{$x_1$} \\
\cline{3-7}
$x_2$ & & $0.00$ & $0.25$ & $0.50$ & $0.75$ & $1.00$ \\
    \hline
      & & & \multicolumn{3}{c}{Mean function: $F_{21}$} \\
      \cline{4-6}
$0.00$ & & 88.5 & 96.5 & 94.1 & 92.4 & 71.1 \\
$0.25$ & & 96.6 & 98.9 & 97.5 & 88.2 & 94.0 \\
$0.50$ & & 94.2 & 97.2 & 92.4 & 95.5 & 98.9 \\
$0.75$ & & 92.4 & 88.7 & 95.3 & 94.9 & 97.3 \\
$1.00$ & & 68.9 & 93.2 & 98.2 & 96.5 & 96.8 \\
      & & & \multicolumn{3}{c}{Mean function: $F_{26}$} \\
      \cline{4-6}
$0.00$ & & 86.8 & 88.2 & 90.6 & 92.0 & 98.7 \\
$0.25$ & & 88.7 & 82.9 & 84.0 & 90.2 & 95.1 \\
$0.50$ & & 91.0 & 84.2 & 88.2 & 94.3 & 97.4 \\
$0.75$ & & 91.8 & 90.0 & 95.0 & 97.9 & 97.9 \\
$1.00$ & & 98.1 & 97.7 & 98.0 & 97.7 & 98.8 \\
\hline
\end{tabular}
\label{empircal2d}
\end{table}

\subsection{Application: environmental toxicology data}


To illustrate use of our projection method in practice, we consider two-predictor data from an environmental toxicology experiment described in \cite{Roland-walt}.  Two potentially hazardous agents, dichlorodiphenyltrichloroethane (DDT) and titanium dioxide nanoparticles (nano-TiO$_2$), were studied for their ability to induce cellular damage (as micronucleus formation) in human hepatic cells.
The two predictor variables here are taken as log-transformed concentrations of the toxins: $x_1 = \log_{10}(\text{DDT}) + 4$ and $x_2 = \log_{10}(\text{TiO}_2) + 3$.  (Control doses at zero concentrations were adjusted using consecutive-dose average spacing, from \cite{PiBa05}.)
The data in Table \ref{tab:4x4} show proportions of cells exhibiting damage after exposure to various combinations of $x_1$ and $x_2$.

\begin{table}[!ht]
\begin{center}
\caption{Proportions of human hepatic cells exhibiting micronuclei after exposure to DDT (as predictor variable $x_1$; see text) and nano-TiO$_2$ (as predictor variable $x_2$; see text); adapted from \protect\cite{Shi10}.}
\begin{tabular}{lrccccc}
\multicolumn{2}{c}{} & & \multicolumn{4}{c}{$x_2$: nano-TiO$_2$} \\
\cline{4-7}
\noalign{\smallskip}
\multicolumn{2}{c}{} & &  0 & 1 & 2& 3 \\
\noalign{\smallskip}
\hline
\noalign{\smallskip}
           & 0     & & 59/3000 & \hphantom{1}65/3000 & \hphantom{1}70/3000 & \hphantom{1}67/3000\\
           & 1 & & 67/3000 & \hphantom{1}75/3000 & \hphantom{1}83/3000 & \hphantom{1}84/3000\\[-1ex]
\raisebox{1.5ex}{$x_1$: DDT}
          & 2   & & 76/3000 & \hphantom{1}87/3000 & \hphantom{1}96/3000 & \hphantom{1}83/3000\\
          & 3    & & 94/3000 &     107/3000 &     110/3000 &     117/3000\\
\noalign{\smallskip}\hline
\end{tabular}
\end{center}
\label{tab:4x4}
\end{table}

We applied our monotonic projection method to these data in which an initial kernel estimator is used, in order to estimate the probability of response over the range of log-transformed doses.  From this, pointwise 95\% bootstrap confidence bounds were also calculated, based on 2000 bootstrap samples.  The bootstrap bounds and function estimate are plotted in Figure \ref{fig-realdata}. The display shows that the effect of TiO$_2$ is slight, but the effect of DDT is marked.  The probability of cellular damage rises quickly with exposure, then levels.  It then jumps up further after a concentration of about
0.05 $\mu$mol/L,   
such that the response appears to be essentially a step function.

\begin{figure}[!ht]
\caption{Projection estimate and pointwise 95\% bootstrap confidence bounds for data in Table \ref{tab:4x4} on proportions of human hepatic cells exhibiting micronuclei (per 3,000). The function estimate and observed data are in black, and bounds are in gray. The concentrations are given in their $\log_{10}$ scale}
\includegraphics[width = 8cm,height = 8cm]{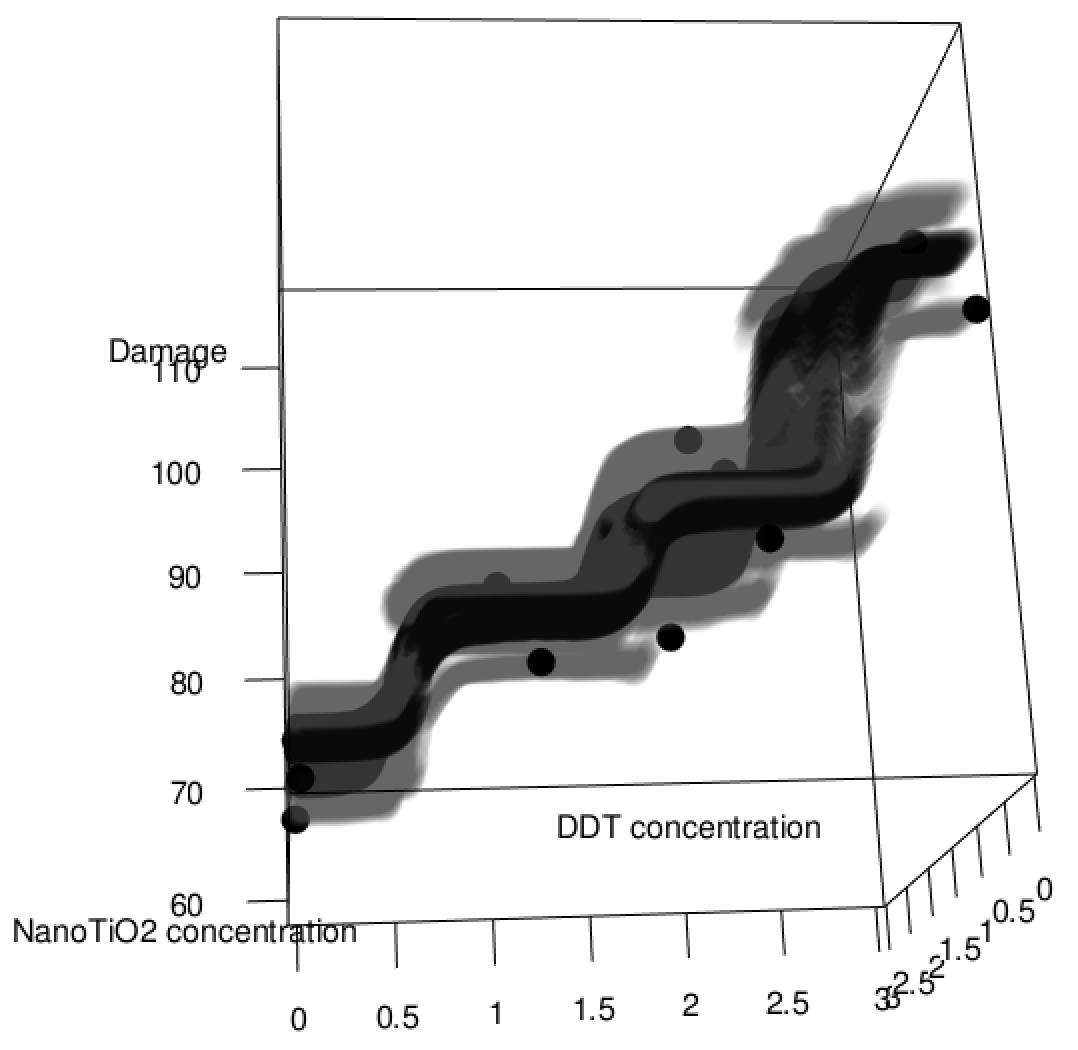}
\label{fig-realdata}
\end{figure}

\section{Discussion}
\label{sec:newSec5}
We propose a general projection framework for estimating multiple monotone regression functions.  An initial naive estimator such as the kernel or spline estimator is  first obtained, which is projected onto the monotone space serving as the ultimate estimate of the true  monotone regression regression.  The projection estimate is shown  to reduce estimation error, compared to that of an initial estimator.  Efficient computational algorithms are available for approximating the estimates. A simulation study and a data example show that the estimates possess practical finite-sample performance.

The methods exhibit good performance, although future work can expand their practicality.  For example, it is of both theoretical and practical interest to
extend the pointwise confidence bounds on the estimated surface into simultaneous confidence bands.
Constructing
confidence bands for nonparametric regression
functions
is overall a very challenging problem (see. e.g., \cite{simuband})
and we hope to report results on this under our shape constraint framework in a future work.

\section*{Appendix}

\begin{appendix}
\renewcommand{\theequation}{A.\arabic{equation}}
\renewcommand{\thelemma}{A.\arabic{lemma}}

\begin{proof} [Proof of Proposition \ref{prop-2.1}]
Our proposition falls as a special case of Lemma 2.3 of \cite{frechref}, we give another proof here which gives more insights onto the projection algorithm.

By Algorithm 1,  the multivariate projection is obtained by a collection of sequential one-dimensional projections. Let $w=\widehat{F}(x)$  be the pre-projected estimate and $P_w=\widetilde{F}(x)$  be the projection of $w$.  We first prove that Proposition \ref{prop-2.1} holds for $p=2$, the two-dimensional projection.  Therefore,
\begin{equation}
\label{eq-mother2}
 P_w =\argmin_{G \in \mathcal{M}} \int_0^1\int_0^1\{ w(s, t) - G(s, t) \}^2ds dt.
\end{equation}
  Note that $P_w$ is the limit of $\widehat{w}^{(k)}$ and $\widetilde w^{(k)}$, where $\widehat{w}^{(k)}$ is the  one-dimensional projection of $w+T^{(k-1)}$ along the $s$ direction  for any $t$, and $\widetilde w^{(k)}$ is the  one-dimensional projection of $w+S^{(k)}$ along the $t$ direction for any $s$.

Define the norm of any two-dimensional function $f(s,t)$ as
$||f||=\langle f,f\rangle^{1/2}=\left[\int \left\{f^2(s,t)\right\}ds dt\right]^{1/2}$
with $\langle\cdot,\cdot\rangle$ denoting the inner product. Then by the property of projection, one has

\begin{align}
\label{eq-important1}
\langle w-P_w,P_w \rangle=0,
\end{align}
and
\begin{align}
\label{eq-important2}
\langle w-P_w, h\rangle\leq 0\;\text{for any} \; h\in \mathcal{M}.
\end{align}

In the following, we proceed to show that for any $k$,
\begin{align}
\label{eq-important3}
 ||\widehat{w}^{(k)}||\geq||\widetilde{w}^{(k)}||\geq ||\widehat{w}^{(k+1)}||,
 \end{align}
i.e., that the norm of the sequence $\widehat{w}^{(k)}$ and $\widetilde w^{(k)}$ is not increasing.  In order to do so, we first introduce the notion of cones and dual cones of functions.
Let $C_s$ be the cone of the continuous functions $f(s,t)$ which are monotone with respect to $s$ for any $t$, and $C_t$ be the cone of continuous functions which are monotone with respect to $t$ for any $s$.
Define their dual cones $C_s^*$ and $C_t^*$ as
\begin{equation*}
C_s^*=\left\{g(s,t)\in C[0,1]^2: \int f(s,t)g(s,t)ds\leq 0,\;\text{for all}\; t \;\text{and}\; f\in C_s  \right\},
\end{equation*}
and
\begin{equation*}
C_t^*=\left\{g(s,t)\in C[0,1]^2: \int f(s,t)g(s,t)dt\leq 0,\;\text{for all}\; s \;\text{and}\; f\in C_t  \right\}.
\end{equation*}
Denote $P(w\mid C_s)$ as the projection of $w$ over $C_s$ found by minimizing $\int (w-f)^2ds$ for all $f\in C_s$ and any fixed $t$. Denote  $P(w\mid C_t)$ as the projection of $w$ over $C_t$ found by minimizing $\int (w-f)^2dt$ for all $f\in C_t$ and any fixed $s$.
 By Lemma A1 of \cite{Lin23022014}, the following holds:
\begin{equation}
\label{eq-important4}
P(w\mid C_s^*)=w-P(w\mid C_s)\;\text{and}\; P(w\mid C_t^*)=w-P(w\mid C_t).
\end{equation}
Note that  $-S^{(k+1)}=(w+T^{(k)})-\widehat{w}^{(k)}=(w+T^{(k)})-P(w+T^{(k)}|C_s)=P(w+T^{(k)}|C_s^*)$ where the last equality follows from  \eqref{eq-important4}. Here $P(w+T^{(k)}|C_s)$ denotes the projection of $w+T^{(k)}$ onto $C_s$ and $P(w+T^{(k)}|C_s^*)$ is the projection onto $C_s^*$.
Therefore  $-S^{(k+1)}$ minimizes $||(w+T^{(k)})-f||$ for all $f\in C_s^*$ and $-T^{(k)}$ minimizes $||(w+S^{(k)})-f||$ for all $f\in C_t^*$. Then, one  concludes that
$$||\widehat{w}^{(k)}||=||w+S^{(k)}-(-T^{(k-1)})||\geq ||w+S^{(k)}-(-T^{(k)})||\geq ||w+T^{(k)}-(-S^{(k+1)})||$$
for all $k$. Therefore, one has $||\widehat{w}^{(k)}||\geq||\widetilde{w}^{(k)}||\geq ||\widehat{w}^{(k+1)}||$.

Now, for sufficiently large $k$, one has
\begin{align}
\|P_w\|\leq \|\widehat{w}^{(k+1)}\|\leq \|\widetilde{w}^{(k)}\|\leq \| \widehat{w}^{(k)}\|\leq \cdots\leq \|w\|.
\end{align}
By the above equation and \eqref{eq-important2},
\begin{align*}
\|P_w-F\|^2&=\|P_w\|^2+\|F\|^2-2\langle F,P_w \rangle\\
&\leq \|w\|^2+\|F\|^2-2\langle F,w \rangle\\
&=\|w-F\|^2,
\end{align*}
proving our contention.
\end{proof}

\begin{lemma}
\label{lemma-project}
Denote $\mathcal{C}$ as the \emph{convex cone} of a function on some domain set $X=[0,1]^p$, which includes $\cM$, the convex cone of monotone functions on $X$. Let $g$ be any function on $X$ and $g^*\in \mathcal C$ such that
\begin{align}
g^*=\arg\min_{f\in \mathcal C}\int_X \big\{g(x)-f(x)\big\}^2dx.
\end{align}
Then, one has for every $f\in \mathcal C$,
\begin{equation}
\label{eq-lemeq1}
\int_{x\in X} \big\{g(x)-g^*(x)\big\}\big\{g^*(x)-f(x)\big\}dx\geq0,
\end{equation}
so that
\begin{equation}
\label{eq-lemeq2}
\int_{x\in X} \big\{g(x)-g^*(x)\big\}g^*(x)dx=0,
\end{equation}
and
\begin{equation}
\label{eq-lemeq3}
\int_{x\in X} \big\{g(x)-g^*(x)\big\}f(x)dx \leq0.
\end{equation}
\end{lemma}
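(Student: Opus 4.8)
The statement to prove is Lemma \ref{lemma-project}, which is essentially the standard characterization of the projection onto a closed convex cone in a Hilbert space. The plan is to work in the Hilbert space $L^2(X)$ with the usual inner product $\langle u,v\rangle = \int_X u(x)v(x)\,dx$, and to exploit that $g^*$ minimizes $\|g-f\|^2$ over the \emph{convex} set $\mathcal C$.

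First I would establish the variational inequality \eqref{eq-lemeq1}. Fix any $f\in\mathcal C$. Because $\mathcal C$ is convex, the segment $(1-\epsilon)g^* + \epsilon f = g^* + \epsilon(f-g^*)$ lies in $\mathcal C$ for all $\epsilon\in[0,1]$. Define the scalar function $\phi(\epsilon) = \|g - g^* - \epsilon(f-g^*)\|^2$. Since $g^*$ is the minimizer over $\mathcal C$, we have $\phi(\epsilon)\geq\phi(0)$ for all $\epsilon\in[0,1]$, so $\phi'(0)\geq 0$. Expanding, $\phi(\epsilon) = \|g-g^*\|^2 - 2\epsilon\langle g-g^*, f-g^*\rangle + \epsilon^2\|f-g^*\|^2$, hence $\phi'(0) = -2\langle g-g^*, f-g^*\rangle \geq 0$, which is precisely \eqref{eq-lemeq1}.

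Next I would derive \eqref{eq-lemeq2} and \eqref{eq-lemeq3} from \eqref{eq-lemeq1} using the cone structure. Since $\mathcal C$ is a convex cone, $2g^*\in\mathcal C$ and $\tfrac12 g^*\in\mathcal C$ (assuming $g^*\neq 0$; the case $g^*=0$ is trivial). Applying \eqref{eq-lemeq1} with $f = 2g^*$ gives $\langle g-g^*, g^*\rangle \geq 0$, and with $f = \tfrac12 g^*$ gives $\langle g-g^*, -\tfrac12 g^*\rangle\geq 0$, i.e. $\langle g-g^*, g^*\rangle\leq 0$. Together these force $\langle g-g^*, g^*\rangle = 0$, which is \eqref{eq-lemeq2}. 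Finally, for any $f\in\mathcal C$, combine \eqref{eq-lemeq1} with \eqref{eq-lemeq2}: $0\leq\langle g-g^*, g^*-f\rangle = \langle g-g^*,g^*\rangle - \langle g-g^*,f\rangle = -\langle g-g^*,f\rangle$, yielding \eqref{eq-lemeq3}.

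There is no genuine obstacle here; the only point requiring a modicum of care is the differentiation step, which I would handle by the explicit quadratic expansion above rather than invoking a calculus rule, and the use of $2g^*,\tfrac12 g^*\in\mathcal C$, which relies on $\mathcal C$ being a \emph{cone} and not merely convex — worth stating explicitly since the lemma is later applied to $\cM$ and to the univariate monotone cones $C_s, C_t$, all of which are indeed closed convex cones. One should also note that existence and uniqueness of $g^*$ are taken as given (or follow from $\mathcal C$ being a closed convex subset of the Hilbert space $L^2(X)$), so the argument only needs the defining minimality property of $g^*$.
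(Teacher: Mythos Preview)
Your approach is correct and essentially identical to the paper's: both derive \eqref{eq-lemeq1} by considering the convex combination $(1-\alpha)g^*+\alpha f$ and using nonnegativity of the one-sided derivative at $\alpha=0$, then obtain \eqref{eq-lemeq2} by substituting scalar multiples of $g^*$ (the paper uses $f=cg^*$ with $c=\tfrac12$ and $c=2$, exactly your choices), and finally deduce \eqref{eq-lemeq3} by subtraction. One small slip: you have the two substitutions swapped---plugging $f=2g^*$ into \eqref{eq-lemeq1} gives $\langle g-g^*,\,g^*-2g^*\rangle=-\langle g-g^*,g^*\rangle\ge 0$, i.e.\ $\langle g-g^*,g^*\rangle\le 0$, while $f=\tfrac12 g^*$ gives $\langle g-g^*,\tfrac12 g^*\rangle\ge 0$; the conclusion \eqref{eq-lemeq2} is of course unaffected.
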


\begin{proof}
By the definition of a convex cone, for any $\alpha\in [0,1]$ and any $f\in \mathcal C$, $(1-\alpha)g^*+\alpha f\in \mathcal C$.
Then
\begin{align*}
\int_X \big[g(x)-\{(1-\alpha)g^*+\alpha f \}\big]^2dx
\end{align*}
achieves its minimum at $\alpha=0$. Now, take the derivative of the above objective function to find
\begin{align*}
2\int_X \big[g(x)-\{(1-\alpha)g^*+\alpha f \} \big]\big\{g^*(x)-f(x)\big\}dx,
\end{align*}
which is non-negative at $\alpha=0$.
Therefore,
\begin{equation*}
\int_{x\in X} \big\{g(x)-g^*(x)\big\}\big\{g^*(x)-f(x)\big\}dx\geq0.
\end{equation*}
Now let $f(x)=cg^*(x)$, so that
\begin{equation*}
\int_{x\in X} \big\{g(x)-g^*(x)\big\}(1-c)g^*(x)dx\geq0.
\end{equation*}
By letting $0<c\leq 1$, for example, letting $c=1/2$, one has
\begin{equation*}
\frac{1}{2}\int_{x\in X} \big\{g(x)-g^*(x)\big\}g^*(x)dx\geq0.
\end{equation*}
Now let $c\geq 1$, e.g, $c=2$ then
\begin{equation*}
\int_{x\in X} \big\{g(x)-g^*(x)\big\}g^*(x)dx\leq0.
\end{equation*}
This implies that
\begin{equation*}
\int_{x\in X} \big\{g(x)-g^*(x)\big\}g^*(x)dx=0,
\end{equation*}
which further implies
\begin{equation*}
\int_{x\in X} \big\{g(x)-g^*(x)\big\}f(x)dx \leq0.
\end{equation*}

\end{proof}

\begin{proof}[Proof of Theorem \ref{th-1d}]

Let $\phi(u)$ be the derivative of the convex function $\Phi(u)$. By the property of convex functions, one has
\begin{align*}
\Phi(v)-\Phi(u)\geq (v-u)\phi(u).
\end{align*}
Let $u=\widetilde{F}(x)-F(x)$ and $v=\widehat{F}(x)-F(x)$.  Then
\begin{align*}
 \Phi\left\{\widehat{F}(x)-F(x)\right\}   \geq  \Phi\left\{\widetilde{F}(x)-F(x)\right\}+\left\{\widehat{F}(x)-\widetilde{F}(x)\right\}\phi\left\{\widetilde{F}(x)-F(x)\right\}.
\end{align*}
Thus,
\begin{align*}
\int_0^1 \Phi\left\{\widehat{F}(x)-F(x)\right\}dx\geq \int_0^1 \Phi\left\{\widetilde{F}(x)-F(x)\right\}dx+\int_0^1\left\{\widehat{F}(x)-\widetilde{F}(x)\right\}\phi\left\{\widetilde{F}(x)-F(x)\right\}dx.
\end{align*}
It suffices to show that
\begin{align*}
\int_0^1\left\{\widehat{F}(x)-\widetilde{F}(x)\right\}\phi\left\{\widetilde{F}(x)-F(x)\right\}dx\geq 0.
\end{align*}
Note that $\widetilde{F}(x)$ is the slope of $T(\bar{\widehat{F}}(x))$,  the greatest convex minorant of $\bar{\widehat{F}}(x)=\int_0^x \widehat{F}(s)ds.$
One can write the unit interval [0,1] as the union of the sets $\{x: T(\bar{\widehat{F}}(x))=\bar{\widehat{F}}(x)\}$,  over which the function $\widehat{F}(x)$ is monotone, and the disjoint open sets $\{x: T(\bar{\widehat{F}}(x))<\bar{\widehat{F}}(x)\}$.  One can see that over each of the disjoint sets $\{x: T(\bar{\widehat{F}}(x))<\bar{\widehat{F}}(x)\}$, $T(\bar{\widehat{F}}(x))$ is a linear function. (Otherwise, one can always construct a convex function above it, leading to a contradiction.) Therefore, $\widetilde{F}(x)$, which is the derivative of $T(\bar{\widehat{F}}(x))$, is a constant function over each of these sets.  Let  $\{x: T(\bar{\widehat{F}}(x))<\bar{\widehat{F}}(x)\}=\cup U_i$ where $U_i$ and $U_j$ are disjoint intervals for $i\neq j$.  Let $\widetilde{F}(x)=c_i$ over the set $U_i$. One has $c_i=\dfrac{\int_{U_i}\widehat{F}(x)dx }{|U_i|}$ (see Lemma 2 of \cite{giso}), where $|U_i|$ denotes the length of the intervals, which can be viewed as the projection of $\widehat{F}(x)$ restricted to the set $U_i$.  Then
\begin{align*}
\int_0^1\left\{\widehat{F}(x)-\widetilde{F}(x)\right\}\phi\left\{\widetilde{F}(x)-F(x)\right\}dx=\sum_{i}\int_{U_i} \left\{\widehat{F}(x)-\widetilde{F}(x)\right\}\phi\left\{\widetilde{F}(x)-F(x)\right\}dx.
\end{align*}

Since $F(x)$ is a monotone increasing function, $-F$ is decreasing, thus  $\phi\left\{\widetilde{F}(x)-F(x)\right\}$ is decreasing over $U_i$. Then by equation \eqref{eq-lemeq3} of Lemma \ref{lemma-project}, one has $\int_{U_i} \left\{\widehat{F}(x)-\widetilde{F}(x)\right\}\phi\left\{\widetilde{F}(x)-F(x)\right\}dx\geq 0$.  Then
\begin{align*}
\int_0^1\left\{\widehat{F}(x)-\widetilde{F}(x)\right\}\phi\left\{\widetilde{F}(x)-F(x)\right\}dx\geq 0,
\end{align*}
which implies that
\begin{equation*}
\int_0^1\Phi\left\{\widetilde{F}(x)-F(x)\right\}dx\leq \int_0^1\Phi\left\{\widehat{F}(x)-F(x)\right\}dx.
\end{equation*}

\end{proof}

\end{appendix}

\bibliography{reference-doseWP}
\bibliographystyle{abbrv}
\end{document}